\begin{document}

%%
%% The "title" command has an optional parameter,
%% allowing the author to define a "short title" to be used in page headers.
\title{Structural Interventions and the Dynamics of Inequality}

%%
%% The "author" command and its associated commands are used to define
%% the authors and their affiliations.
%% Of note is the shared affiliation of the first two authors, and the
%% "authornote" and "authornotemark" commands
%% used to denote shared contribution to the research.
\author{Aurora S. Zhang}
\orcid{0000-0002-6604-4765}
\affiliation{%
  \institution{Massachusetts Institute of Technology}
  \streetaddress{77 Massachusetts Ave}
  \city{Cambridge}
  \state{Massachusetts}
  \country{USA}
  \postcode{02139}
}
\email{aszhang@mit.edu}

\author{Anette E. Hosoi}
\orcid{0000-0003-4940-7496}
\affiliation{%
  \institution{Massachusetts Institute of Technology}
  \streetaddress{77 Massachusetts Ave}
  \city{Cambridge}
  \state{Massachusetts}
  \country{USA}
  \postcode{02139}
}
\email{peko@mit.edu}

%%
%% By default, the full list of authors will be used in the page
%% headers. Often, this list is too long, and will overlap
%% other information printed in the page headers. This command allows
%% the author to define a more concise list
%% of authors' names for this purpose.
\renewcommand{\shortauthors}{Zhang et al.}

%%
%% The abstract is a short summary of the work to be presented in the
%% article.
\begin{abstract}
  Recent conversations in the algorithmic fairness literature have raised several concerns with standard conceptions of fairness. First, constraining predictive algorithms to satisfy fairness benchmarks may sometimes lead to non-optimal outcomes for disadvantaged groups. Second, technical interventions are often ineffective by themselves, especially when divorced from an understanding of structural processes that generate social inequality. Inspired by both these critiques, we construct a common decision-making model, using mortgage loans as a running example. We show that under some conditions, any choice of decision threshold will inevitably perpetuate existing disparities in financial stability unless one deviates from the Pareto optimal policy. This confirms the intuition that technical interventions, such as fairness constraints, often do not sufficiently address persistent underlying inequities. Then, we model the effects of three different types of interventions: (1) policy changes in the algorithm’s decision threshold, and external changes to parameters that govern the downstream effects of late payment for (2) the whole population or (3) disadvantaged subgroups. We show how different interventions are recommended depending on the difficulty of enacting structural change upon external parameters and depending on the policymaker’s preferences for equity or efficiency. Counterintuitively, we demonstrate that preferences for efficiency over equity may sometimes lead to recommendations for interventions that target the under-resourced group alone. Finally, we simulate the effects of interventions on a dataset that combines HMDA and Fannie Mae loan data. This research highlights the ways that structural inequality can be perpetuated by seemingly unbiased decision mechanisms, and it shows that in many situations, technical solutions must be paired with external, context-aware interventions to enact social change.
\end{abstract}

%%
%% The code below is generated by the tool at http://dl.acm.org/ccs.cfm.
%% Please copy and paste the code instead of the example below.
%%
\begin{CCSXML}
<ccs2012>
   <concept>
       <concept_id>10010405.10010455.10010461</concept_id>
       <concept_desc>Applied computing~Sociology</concept_desc>
       <concept_significance>100</concept_significance>
       </concept>
   <concept>
       <concept_id>10003120.10003130.10003134</concept_id>
       <concept_desc>Human-centered computing~Collaborative and social computing design and evaluation methods</concept_desc>
       <concept_significance>500</concept_significance>
       </concept>
   <concept>
       <concept_id>10003120.10003130.10003131.10003579</concept_id>
       <concept_desc>Human-centered computing~Social engineering (social sciences)</concept_desc>
       <concept_significance>300</concept_significance>
       </concept>
 </ccs2012>
\end{CCSXML}

\ccsdesc[100]{Applied computing~Sociology}
\ccsdesc[500]{Human-centered computing~Collaborative and social computing design and evaluation methods}
\ccsdesc[300]{Human-centered computing~Social engineering (social sciences)}

%%
%% Keywords. The author(s) should pick words that accurately describe
%% the work being presented. Separate the keywords with commas.
\keywords{algorithmic justice, social change, structural injustice, interventions, housing policy}

\received{22 Jan 2024}

%%
%% This command processes the author and affiliation and title
%% information and builds the first part of the formatted document.
\maketitle

\section{Introduction}

Algorithms are increasingly being utilized in socially impactful settings to guide decision-making. Several methods have been proposed for debiasing algorithms to ensure that they are non-discriminatory. Many of these methods consist of technical guidelines for algorithm development, which include altering parameters within machine learning models and constraining algorithms to satisfy certain fairness metrics \cite{barocasFairnessMachineLearning2023, mehrabiSurveyBiasFairness2021}.

This framing of algorithmic fairness has been criticized on several different fronts. Some have demonstrated that constraining algorithms to satisfy fairness criteria may lead to suboptimal welfare outcomes for members of both advantaged and disadvantaged subgroups. Others have argued that the focus on formal, technical interventions to ensure fairness distracts from more pressing concerns about how the predictive algorithm’s results are used to make unjust decisions that lead to undesirable social consequences. Still others are concerned with ways in which it is sometimes impossible to separate the effects of historical discrimination from existing social facts encoded in the training data; such historical inequities are often “baked in” to datasets. 

When designing policies to make algorithmic decision-making systems more fair and just, these concerns point us to ask several questions: first, what are the welfare effects and outcomes of these decision-making processes? Second, how can our understanding of the social contexts of these algorithms enable us to target the most crucial leverage points when it comes to policymaking? Third, to what extent are these algorithmic decision-making systems reifying unjust inequities that are a result of historical discrimination, and how can we design policies that ameliorate the effects of this injustice?

We borrow from feminist and political philosophy, as well as previous discussions in the algorithmic fairness literature \cite{kasirzadehAlgorithmicFairnessStructural2022, greenEscapingImpossibilityFairness2022}, to propose how we can use a framework of structural injustice and structural interventions to address these concerns in a formal model. We describe a class of decision structures, which we more fully characterize in section 3, in which applicants in a population undergo a decision process that is a function of their initial financial status. Depending on the result of this decision, applicants then gain or lose financial stability proportionally to their initial status. This is a type of feedback loop, in which the decision structure exacerbates existing inequalities present in the population over time \cite{readerModelsUnderstandingQuantifying2022}. We will use mortgage loans as a running example.

The main contributions of this paper are as follows:

\begin{itemize}

\item We create a quantitative model of the effects of structural interventions on a system with persistent inequality, based on theories of structural injustice previously qualitatively described in the algorithmic fairness literature.
\item We describe conditions under which any intervention upon decision thresholds will either perpetuate systemic inequality, or will harm the better-off group without any improvement to the plight of the worse-off group.
\item We characterize conditions under which structural interventions may be preferable to algorithmic ones, depending on a policymaker's preference for equity or efficiency.
\item We apply our framework and propose examples of structural interventions in the housing domain.

\end{itemize}

In Section 2, we lay out some existing debates within the algorithmic fairness community surrounding the inadequacy of predominant fairness models and situate our contributions within the context of related work. Then in Section 3, we set up our model and prove three propositions about its behavior. Section 4 describes and simulates the effects of different kinds of policy interventions upon sample distributions. Section 5 contains an empirical demonstration and simulation of these results using HMDA and Fannie Mae data, and we conclude in Section 6. 

\section{Background}

Structural injustice has been the focus of recent discussions in the algorithmic fairness literature \cite{kasirzadehAlgorithmicFairnessStructural2022, linArtificialIntelligenceStructurally2022, himmelreichAIStructuralInjustice2022, madaioFairnessStructuralJustice2022}. To illustrate an example of structural injustice, we begin with an enduring puzzle: as of 2024, discrimination based on identity categories like race, gender, and nationality is prohibited across domains like education, employment, lending, and housing. However, unequal outcomes, such as the wealth gap and the homeownership gap between Black and White Americans, or the income gap between men and women, still remain, and in many cases, have grown \cite{aliprantisWhatPersistenceRacial2019, shapiroRootsWideningRacial2013, darityWhatWeGet2018, kermaniRacialDisparitiesHousing2021}.

Why might this be the case? One explanation may be that membership in some identity categories may affect an individual’s preferences and behavior. A second explanation is implicit bias: decision-makers might still be affected by subconscious discrimination. Another explanation is that while present-day decision-makers may be unbiased, very little has been done to rectify past harm, the effects of which still pervade the present. A fourth explanation is that unequal outcomes persist, not only because of the biased actions of individual actors, nor because of injustice that exclusively happened in the past, but because of larger systems that are structurally biased against marginalized groups. 

Our definition of structural injustice focuses on the third and fourth explanations for how decision-making systems generate social inequality. First, decision-making systems are often trained on data that reflect unjust historical patterns, which causes them to reproduce status quo disparities that are a result of historical inequity \cite{maysonBiasBiasOut2019, greenFalsePromiseRisk2020}.
This problem cannot always be resolved by better data collection or by clarifying gaps between theoretical constructs and measurement \cite{friedlerImPossibilityFairness2021, jacobsMeasurementFairness2021}. In these most difficult cases, membership in the minoritized group can be constitutively related to features in the construct space that are, in turn, casually influential upon target outcomes \cite{huWhatRaceAlgorithmic2023, heringtonMeasuringFairnessUnfair2020a}.  Parallels can be drawn between this type of injustice and Young’s account of structural injustice, which often arises in situations where no actor is liable for blame or wrongdoing \cite{youngResponsibilityJustice2010}. Similarly, algorithmic decision-making systems can reify status quo inequalities without being biased or discriminatory in a traditional sense, and without being the result of blameworthy behavior by a discrete bad actor \cite{hoffmanWhereFairnessFails2019}.

Second, decision-making systems perpetuate inequality because they are situated in present-day, unjust social structures. A social structure can be thought of as a particular configuration of a network of causal influence between social phenomena \cite{haslangerWhatSocialStructural2016, sankaranStructuralInjusticeAnalytical2021}. Haslanger also describes a socio-structural explanation as one that answers not the what but the why of causal relationships. These investigations can be useful for understanding the effects of unjust social structures. For instance, a recidivism risk algorithm may inform a judge’s decision for pre-trial detainment. In many cases, we might ask not just “was this algorithm fair to sentence Black defendants in these cases?” but also, “what are the social structures that have created a causal connection between recidivism risk and imprisonment, or between imprisonment rate and community well-being?” These questions illuminate how the prison-industrial system structures networks of causal influence between algorithms and their social consequences.

While there is debate about moral responsibility for structural injustice \cite{mckeownStructuralInjustice2021}, we take as a starting point that data scientists and decision-makers are responsible for the social consequences of algorithmic decision-making systems \cite{kasirzadehAlgorithmicFairnessStructural2022, greenDataSciencePolitical2021}. We investigate the conditions under which decision-making systems perpetuate injustice and the mechanisms by which they do. 

Some solutions have included developing more careful guidelines for how algorithmic predictions affect decision-making, using distributive justice as a standard for making better decisions \cite{kupplerFairPredictionsJust2022a}. We draw in particular from Green’s concept of a “structural intervention” toward algorithmic justice \cite{greenEscapingImpossibilityFairness2022}, which alter the way that algorithmic predictions affect downstream social outcomes. In the recidivism case, this may mean decreasing incarceration time overall while increasing social services, so that an algorithmic prediction may not have such high-stakes social consequences. We model these structural responses by changing the parameters that govern the magnitude of downstream effects. \autoref{fig:structural} demonstrates this graphically: instead of focusing on debiasing a predictive algorithm at the site of the algorithm, we must examine the larger social structure that the algorithm is a part of, and if necessary, consider interventions upstream and downstream of the decision. This structural injustice framework describes how the entire causal chain may generate injustice without blameworthiness or liability at any particular point. 

\begin{figure}
    \centering
    \includegraphics[scale=0.6]{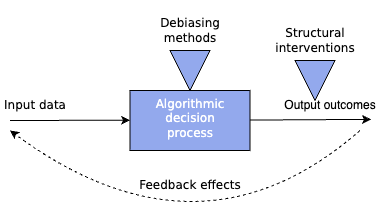}
    \caption{Structural interventions in the algorithmic decision-making pipeline change the effects of decisions on downstream consequences.}
    \label{fig:structural}
\end{figure}

\subsection{Fairness and the consequentialist and structural critique }

The socio-structural explanation for persistent inequality responds to many existing concerns raised by the algorithmic fairness literature, and in many cases, offers a solution to these existing dilemmas. 

First, the sociotechnical critique argues that many fairness interventions that focus exclusively on technical interventions at the site of the algorithm \cite{greenAlgorithmicRealismExpanding2020, greeneBetterNicerClearer2019} can obfuscate larger patterns of unjust social systems that these algorithms are a part of \cite{munnUselessnessAIEthics2023}. Instead, algorithms should be evaluated with their social context, with a particular focus on the interactions between the social and technical aspects of a decision-making system. This broader evaluative framework may investigate legal and political contexts, data collection processes, and social impacts of the decision-making systems \cite{selbstFairnessAbstractionSociotechnical2019a, cooperEmergentUnfairnessAlgorithmic2021b, dolataSociotechnicalViewAlgorithmic2021}.  In practice, some have suggested frameworks that use value chains or bottleneck theory \cite{jainAlgorithmicPluralismStructural2023, attard-frostEthicsAIValue2023a} that identify vulnerable or high-impact locations within the algorithmic pipeline that are most in need of intervention. 

Second, the consequentialist critique describes how seemingly fair procedural mechanisms may generate non-optimal outcomes for disadvantaged groups \cite{huFairClassificationSocial2020, corbett-daviesMeasureMismeasureFairness2018, sunBackfireEffectsFairness2022}, especially over a longer-term time horizon \cite{liuDelayedImpactFair2018a,damourFairnessNotStatic2020, puranikDynamicDecisionMakingFramework2022}. Additionally, compliance with legal standards of non-discrimination by “blinding” an algorithm to an individual’s protected attribute may also generate non-optimal outcomes \cite{mullainathanAlgorithmicFairnessSocial2018,kleinbergAlgorithmicFairness2018, kimRaceAwareAlgorithmsFairness2022}. Overall, the consequentialist perspective encourages evaluating outcomes rather than procedures \cite{cardConsequentialismFairness2020, kasyAlgorithmicBiasRacial2023, gillisInputFallacy2021a}, arguing that practitioners concerned with justice ought to directly study welfare for disadvantaged groups. Consequentialist-flavored case studies on algorithmic justice have been explored in domains such as education \cite{corbett-daviesMeasureMismeasureFairness2018}, lending \cite{chohlas-woodLearningBeFair2023}, and criminal justice \cite{huqRacialEquityAlgorithmic2019, corbett-daviesAlgorithmicDecisionMaking2017}. 

Third, impossibility theorems have proven that when there exist differences in the base rate of outcomes, decision-makers often cannot simultaneously satisfy multiple intuitive fairness metrics. For example, equal opportunity and calibration can be incompatible \cite{kleinbergInherentTradeOffsAlgorithmic2018}, as are equal opportunity and predictive parity \cite{chouldechovaFairPredictionDisparate2016}. Green discusses how this result is a reflection of the social world and its existing inequalities, demonstrating that “in an unequal society, decisions based in formal equality are guaranteed to produce substantive inequality.” \cite{greenEscapingImpossibilityFairness2022}. 

Fourth, the historical bias critique argues that, because of existing biases embedded in data, standard intuitions about what constitutes a fair decision frequently fail. Existing inequities can exist in the relevant construct space—for instance, racial discrimination can generate a predictive relationship between race and socioeconomic status that can have a real bearing on some outcome of interest \cite{friedlerImPossibilityFairness2021}. When these relationships exist due to historical discrimination, standard assumptions that are used to reason about intuitive fairness standards may fail \cite{fazelpourAlgorithmicFairnessNonideal2020a, heringtonMeasuringFairnessUnfair2020a, zimmermannProceedCaution2022}. Furthermore, some algorithmic decision-making systems may dynamically interact with a social system at multiple points, where input parameters at time $t$ may be dependent upon outputs of the algorithm itself at time $t-1$ or earlier. In these cases, algorithms may not simply perpetuate historical injustice but sometimes may even exacerbate discrepancies \cite{ensignRunawayFeedbackLoops2018,raabUnintendedSelectionPersistent2021,dobbeBroaderViewBias2018, fusterPredictablyUnequalEffects2022, linRichGetRicher2023}.

The structural explanation addresses concerns raised by these dilemmas. Our model responds to the sociotechnical critique, making explicit the values that policymakers might prefer \cite{cooperEmergentUnfairnessAlgorithmic2021b} and thinking beyond technical solutions for social change \cite{selbstFairnessAbstractionSociotechnical2019a}. The structural change model also addresses the consequentialist critique, explicitly defining a welfare function and assessing how policy changes may alter outcomes for both advantaged and disadvantaged populations. Next, the structural change model is based on Green’s suggestion that the concern with misaligned fairness metrics is largely a result of existing inequalities in the social world. Our model aims to capture situations in which such “base rate” differences can or cannot be eliminated. Finally, the dynamic nature of our model allows for the assessment of long-term consequences in a feedback-based setting, showing how existing inequalities might be perpetuated or mitigated over time. 

\subsection{Related Work}

Our work builds on a body of literature studying the dynamics of qualification rates. Liu et al. model a one-step lending process where individual’s score is affected by a selection mechanism. This model shows that constraining selection policies to satisfy certain fairness criteria, such as demographic parity or equal opportunity, can sometimes lead to greater declines in qualification status compared to the unconstrained policy \cite{liuDelayedImpactFair2018a}. Jorgensen et al. extend this result to several other fairness criteria \cite{jorgensenNotFairImpact2023}. Williams and Kolter model a similar loan approval setting as Liu et al., with slight modifications to the update function \cite{williamsDynamicModelingEquilibria2019}. In contrast, they show that unconstrained policies may increase inequality while fairness-constrained policies can sometimes lead to convergent outcomes. Our lending setting is inspired by these models, but instead of evaluating their behavior when constrained by specific fairness rules, we show that there are certain conditions where no possible policy constraints can decrease inequality without harming the advantaged group.

Other related work studies how qualification rates dynamically evolve over a longer time period \cite{mouzannarFairDecisionMaking2019}, how the model might change when individuals can choose to invest resources in improving their qualification status over time \cite{zhangHowFairDecisions2020}, or how within-group disparities are exacerbated by these decision systems \cite{sunBackfireEffectsFairness2022}. Like Mouzannar et al., Zhang and Tu, and Sun et al., we demonstrate that there exist conditions under which no threshold policy will lead to equality in the long term. We additionally examine a certain type of policy intervention that alters the relationship between late payment and financial penalties, which we call a structural intervention. Most of the works mentioned above explicitly assume a fixed background structure, examining only interventions on approval rates. Zhang and Tu allude to a certain type of “transitional intervention”  that increases the equilibrium approval rate for both groups, but do not show how such interventions may increase the equilibrium equity between these groups. We also study the differences between group-blind and group-specific interventions.

The most similar work that formally models structural interventions may be \cite{cruzcortesLocalityTechnicalObjects2022a}. Cruz Cort\'es et al. models three separate interventions: one that enforces fairness constraints, one that changes the initial score distribution of the two populations, and one where the disadvantaged population receives a boost. Their results largely agree with ours, showing that the latter two structural interventions are more effective in many situations. While Cruz Cort\'es et al. simulates only two structural interventions; we show how structural interventions of different magnitudes may affect outcomes. We then give an example of a policymaker’s utility function that places different weights on short-term equality and efficiency, and show the desirability of different interventions in each case. 

\section{Model}

\subsection{Single Time Step Model}

We use a loan approval setting as a running example. Consider a loan applicant pool that consists of two subgroups: an advantaged group $A$ and a disadvantaged group $D$. At time $t$, every applicant has some likelihood of paying off their requested loan. Let $p \sim \text{Ber}(\pi^i_t)$ be a random variable that corresponds to whether an individual pays off their loan or not, and suppose that the parameter $\pi^i_t$ depends on the individual's subgroup identity $i \in \{A,D\}$ and the time $t$. 

We assume that there exist two different distributions of probability repayment parameter $\pi^i_t$ for the two subgroups. Now consider a threshold decision rule that allocates loans to applicants: we choose $\beta^i \in [0,1]$ such that applicants with repayment probability parameter $\pi^i_t \geq \beta^i$ are granted the loan, while applicants with repayment probability parameter $\pi^i_t < \beta^i$ are denied. In similar settings, it has been demonstrated that a threshold rule is optimal, and that many fairness constraints, such as equal opportunity, can be enforced in terms of different thresholds \cite{liuDelayedImpactFair2018a, zhangHowFairDecisions2020}.

We examine the evolution of the distributions of $\pi^i_t$ over time, where $p \sim \text{Ber}(\pi^i_t)$:
\begin{equation*}\label{equation: updateeq}
\pi^i_{t+1} = \begin{cases}
\max\{\min\{\pi^i_t + kp - ck(1-p),1\},0\}
& \pi^i_t \geq \beta^i \\ 
\pi^i_t & \pi^i_t < \beta^i \end{cases}
\end{equation*}

$k$ is a scaling parameter, and $c$ measures the severity of financial penalty for a late payment. We will examine interpretations of $c$ more closely in the last section. At each time step, part of the population is approved and part of the population is denied. Denied individuals maintain the same likelihood of repayment; in this step, we hold external facts constant, assessing the effects of the decision-making process alone. Approved individuals either accrue benefit $k$, if they make a timely payment at time $t$, or penalized $kc$, if their payment is late. This update equation captures the intuition that timely payment of loans helps build credit, and in the home mortgage case, that homeownership is an avenue for wealth-building. This also captures the intuition that when $c >1$, the penalty for late payment is larger than the benefit accrued from repayment at each time step. Finally, the minimum and maximum bounds ensure that $p^i_{t+1} \in [0,1]$ for all $t$. 
 
Let $\mu^i_t = E[\pi^i_t]$. In expectation, this is
\begin{eqnarray*}\label{eq:mean}
    &&\mu_{t+1} = (1-F_{\pi^i_t}(\beta)) E[\max\{\min\{\pi^i_t + kp - ck(1-p), 1\}, 0\} | \pi^i_t\\ 
&&\geq \beta] + F_{\pi^i_t}(\beta)E[\pi^i_t| \pi^i_t< \beta]
\end{eqnarray*}
$F_{\pi^i_t}(\beta)$ refers to the cumulative distribution of $\pi^i_t$ evaluated at $\beta$. 

We assume that the policymaker is blind to each individual's group identity; they must set a universal $\beta$, so $\beta^A = \beta^D$. This assumption corresponds to legal standards such as disparate treatment law in mortgage lending, which prohibits using race, sex, nationality, and other protected characteristics as a reason for setting different approval rules \cite{kumarEqualizingCreditOpportunity2022, hellmanMeasuringAlgorithmicFairness2020}. 

\begin{definition}
A random variable A dominates B over the interval $[\gamma, \delta]$ if for all $x \in [\gamma, \delta]$, $F_A(x) \leq F_B(x)$, where $F$ is the cumulative distribution function.
\end{definition}

\begin{proposition}\label{structineq}
    Assume that for a given threshold $\beta \in [0,1]$, $\pi^A_t$ dominates $\pi^D_t$ for the interval $[\beta, 1]$, and that $\mu^A_t \geq \mu^D_t$. Then if $P(0 < \pi^A_{t+1},\pi^D_{t+1} < 1) = 1$, then $\mu^A_{t+1} - \mu^A_{t}  \geq \mu^D_{t+1} - \mu^D_{t}$.
\end{proposition}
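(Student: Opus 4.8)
The plan is to collapse everything to a single scalar inequality about the drift generated in the approved region, and then bring the dominance hypothesis to bear there. First I would use the hypothesis $P(0<\pi^A_{t+1},\pi^D_{t+1}<1)=1$ to discard the outer truncation $\max\{\min\{\cdot,1\},0\}$ almost surely. An approved individual with $\pi^i_t=\pi$ then has conditional expected next value $E[\pi+kp-ck(1-p)]=\pi+k\big((1+c)\pi-c\big)$ over $p\sim\mathrm{Ber}(\pi)$, while a denied individual keeps $\pi^i_t$. Subtracting $\mu^i_t$ cancels the entire below-threshold contribution, leaving
\[
\mu^i_{t+1}-\mu^i_t=k\,E\big[\big((1+c)\pi^i_t-c\big)\mathbf{1}[\pi^i_t\ge\beta]\big]=k\,E[\psi(\pi^i_t)],\qquad \psi(x):=\big((1+c)x-c\big)\mathbf{1}[x\ge\beta].
\]
Since $k>0$, the proposition is equivalent to $E[\psi(\pi^A_t)]\ge E[\psi(\pi^D_t)]$, so from here on I only need to compare the two groups under the map $\psi$.

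I would then compare these expectations by integration by parts (a layer-cake computation). Writing $\Delta(x)=F_{\pi^D_t}(x)-F_{\pi^A_t}(x)$ and treating $\beta$ as a continuity point (atoms at $\beta$ handled separately with left limits), one obtains $E[\psi(\pi^A_t)]-E[\psi(\pi^D_t)]=\int_{[0,1]}\Delta\,d\psi$. Because $\psi$ vanishes on $[0,\beta)$, jumps by $(1+c)\beta-c$ at $\beta$, and then rises with slope $1+c$, this equals
\[
\big[(1+c)\beta-c\big]\,\Delta(\beta)+(1+c)\int_\beta^1\Delta(x)\,dx.
\]
The dominance assumption says exactly that $\Delta(x)\ge0$ on $[\beta,1]$, so the integral term is nonnegative, and $\Delta(\beta)\ge0$ gives the ordering of approval masses $1-F_{\pi^A_t}(\beta)\ge 1-F_{\pi^D_t}(\beta)$. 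In the regime $(1+c)\beta\ge c$ the bracketed coefficient is also nonnegative and the inequality follows at once.

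The main obstacle is the complementary regime $(1+c)\beta<c$, where the per-capita drift $(1+c)x-c$ is negative just above the threshold and $\psi$ has a downward jump at $\beta$. Here dominance on $[\beta,1]$ alone is not enough, since enlarging the advantaged group's approved mass can drive its mean down faster; this is precisely the point at which I expect the hypothesis $\mu^A_t\ge\mu^D_t$ to become essential. I would exploit it through the alternative split $E[\psi(\pi^i_t)]=(1+c)\mu^i_t-c-E[\big((1+c)\pi^i_t-c\big)\mathbf{1}[\pi^i_t<\beta]]$, so that the difference becomes the nonnegative quantity $(1+c)(\mu^A_t-\mu^D_t)$ minus a correction supported strictly below the threshold. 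The crux --- and the step I expect to be hardest --- is to show that the dominance structure forces this below-threshold correction to be controlled by the mean gap, i.e.\ to combine $\Delta\ge0$ on $[\beta,1]$ with $\mu^A_t\ge\mu^D_t$ rather than invoking either hypothesis in isolation.
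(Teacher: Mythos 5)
Your reduction is sound and, up to the point where you stop, it coincides with the paper's own computation: dropping the truncation via $P(0<\pi^i_{t+1}<1)=1$ and averaging over $p\sim\mathrm{Ber}(\pi^i_t)$ gives exactly the paper's identity $\mu^i_{t+1}-\mu^i_t=(1-F_{\pi^i_t}(\beta))\,k\,\bigl((1+c)\mu^{i+}_t(\beta)-c\bigr)$, which is your $k\,E[\psi(\pi^i_t)]$, and your layer-cake rewriting of the difference as $[(1+c)\beta-c]\Delta(\beta)+(1+c)\int_\beta^1\Delta(x)\,dx$ is a legitimate repackaging of the same quantity. The genuine gap is the case $(1+c)\beta<c$ that you leave open, and you should know that it cannot be closed as stated: the claimed inequality can fail under the stated hypotheses. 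Take $\pi^A_t$ concentrated at $0.5$, $\pi^D_t$ concentrated at $0.3$, $\beta=0.4$, $c=2$, and $k$ small (smear the point masses over tiny intervals if you want continuous densities). Dominance on $[\beta,1]$ holds since $F_{\pi^D_t}\equiv 1$ there, $\mu^A_t\ge\mu^D_t$, and all updated values stay in $(0,1)$; yet group $D$ is entirely denied, so $\mu^D_{t+1}-\mu^D_t=0$, while group $A$ is entirely approved with expected drift $k\bigl((1+c)(0.5)-c\bigr)=-k/2<0$. Your alternative split diagnoses this correctly: there the below-threshold correction equals $1.1$ while $(1+c)(\mu^A_t-\mu^D_t)=0.6$, so the mean-gap term does not control it, and no combination of the two hypotheses will.

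For comparison, the paper closes the argument by asserting that $1-F_{\pi^A_t}(\beta)\ge 1-F_{\pi^D_t}(\beta)$ together with $\mu^{A+}_t(\beta)\ge\mu^{D+}_t(\beta)$ ensures $a\,h(u)\ge b\,h(v)$, where $a\ge b\ge 0$, $u\ge v$, and $h(m)=(1+c)m-c$. That implication is valid only when $h(\mu^{A+}_t(\beta))\ge 0$, i.e.\ when the approved advantaged subpopulation has nonnegative expected drift; the remaining case $h(\mu^{A+}_t(\beta))<0$ is exactly the regime you flagged, and it is where the counterexample lives. The fix is to make the missing hypothesis explicit: assume $\mu^{A+}_t(\beta)\ge c/(1+c)$ (your condition $(1+c)\beta\ge c$, equivalently $\beta\ge c/(1+c)$, is a clean sufficient condition for it, since $\mu^{A+}_t(\beta)\ge\beta$). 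Under that assumption your first case already finishes the proof, and so does the paper's weighting argument. So rather than searching for a clever way to combine dominance on $[\beta,1]$ with $\mu^A_t\ge\mu^D_t$, state the extra drift condition and stop there.
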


All proofs for this and subsequent propositions can be found in \autoref{ap: proofs}. We make the assumption because of the linearity of our function, and because it is bounded at its endpoints. This assumption is not difficult to attain in practice: $\mu^i_t =1$ if and only if everyone in the population has probability of repayment $=1$, and $\mu^i_t=0$ if and only if everyone in the population has probability of repayment $=0$. These two cases are both trivial, and all possible policies will maintain those equilibria. 

In the intermediate cases, this proposition is related to the speed of convergence of $\mu^A_t$ and $\mu^D_t$. Not only are inequalities between the subpopulations maintained, they are deepened by the process. 

\begin{corollary}\label{structineqcoro}
    Assume that $\pi^A_t$ dominates $\pi^D_t$ for the interval $[0,1]$. Then no matter what choice of $\beta \in [0,1]$, if $P(0 < \pi^A_{t+1},\pi^D_{t+1} < 1) = 1$, $\mu^A_{t+1} - \mu^A_{t}  \geq \mu^D_{t+1} - \mu^D_{t}$.
    \end{corollary}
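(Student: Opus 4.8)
The plan is to derive the corollary directly from Proposition~\ref{structineq} by verifying that the hypotheses of the corollary imply the hypotheses of the proposition for \emph{every} admissible threshold $\beta$. The proposition requires two conditions: (1) $\pi^A_t$ dominates $\pi^D_t$ on the interval $[\beta,1]$, and (2) $\mu^A_t \geq \mu^D_t$. The corollary hands us the stronger hypothesis that $\pi^A_t$ dominates $\pi^D_t$ on all of $[0,1]$. So the first step is simply to observe that if $F_{\pi^A_t}(x) \leq F_{\pi^D_t}(x)$ for all $x \in [0,1]$, then in particular this inequality holds for all $x \in [\beta,1] \subseteq [0,1]$, for any choice of $\beta \in [0,1]$. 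This gives condition (1) of the proposition for free, uniformly in $\beta$.

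The second step is to establish condition (2), namely $\mu^A_t \geq \mu^D_t$, from global dominance alone. Here I would invoke the standard fact that first-order stochastic dominance on the whole support implies an ordering of the means: if $F_{\pi^A_t}(x) \leq F_{\pi^D_t}(x)$ for all $x$, then $E[\pi^A_t] \geq E[\pi^D_t]$. Concretely, for random variables supported on $[0,1]$ one can write the expectation as $\mu^i_t = \int_0^1 \bigl(1 - F_{\pi^i_t}(x)\bigr)\,dx$, so that
\begin{equation*}
\mu^A_t - \mu^D_t = \int_0^1 \bigl(F_{\pi^D_t}(x) - F_{\pi^A_t}(x)\bigr)\,dx \geq 0,
\end{equation*}
where the integrand is nonnegative by the global dominance assumption. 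This secures condition (2) without any additional hypothesis.

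With both hypotheses of Proposition~\ref{structineq} verified for an arbitrary $\beta \in [0,1]$, the final step is to apply the proposition under the remaining shared assumption $P(0 < \pi^A_{t+1}, \pi^D_{t+1} < 1) = 1$, which appears verbatim in both statements, to conclude $\mu^A_{t+1} - \mu^A_t \geq \mu^D_{t+1} - \mu^D_t$. Since $\beta$ was arbitrary, the conclusion holds for every threshold, which is exactly the ``no matter what choice of $\beta$'' claim.

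I do not expect any genuine obstacle here, since the corollary is essentially a specialization: the work has already been done in the proposition, and the corollary merely swaps the per-interval dominance hypothesis for the more natural global one. The only point requiring a small argument rather than a one-line observation is the mean-ordering step, and even there the stochastic-dominance-implies-mean-ordering lemma is elementary. The main thing to be careful about is to state explicitly that global dominance is what lets us discharge \emph{both} proposition hypotheses simultaneously and uniformly in $\beta$, so that the quantifier ``for all $\beta$'' in the corollary is justified rather than assumed.
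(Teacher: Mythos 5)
Your proposal is correct and matches the paper's intent: the paper gives no separate proof of the corollary, treating it as an immediate specialization of Proposition~\ref{structineq}, exactly as you do by noting that dominance on $[0,1]$ restricts to dominance on $[\beta,1]$ for every $\beta$ and implies $\mu^A_t \geq \mu^D_t$ via $\mu^i_t = \int_0^1 (1-F_{\pi^i_t}(x))\,dx$. The one thing you make explicit that the paper leaves implicit is the mean-ordering step, which is a worthwhile clarification but not a different argument.
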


The corollary states that under certain conditions, any inequalities will be deepened by this distribution mechanism. No choice of $\beta$ will be able to avoid this outcome; that is, no change in policy will be enough to avoid perpetuating structural inequality. We show later that this is plausible in many empirical demonstrations. 

Next, we show what happens when we choose the optimal outcome. 

\begin{proposition}
    Assume that $\pi_t^A$ dominates $\pi_t^D$ over the interval $[0, 1]$, and $c, k \geq 0$. There exists an optimal decision threshold $\hat{\beta}$ that simultaneously maximizes $E[\pi^A_{t+1}]$ and $E[\pi^D_{t+1}]$. Then $\mu^A_{t+1} \geq \mu^D_{t+1}$ if we set the threshold to be $\hat{\beta}$.
\end{proposition}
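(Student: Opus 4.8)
The plan is to reduce the whole statement to a single-crossing analysis of the per-applicant gain from approval, combined with first-order stochastic dominance.

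First I would compute the expected next-step parameter of an approved applicant. Since $p \sim \mathrm{Ber}(\pi)$, an approved individual at $\pi$ moves to $\min\{\pi+k,1\}$ with probability $\pi$ and to $\max\{\pi-ck,0\}$ with probability $1-\pi$, so in expectation they land at
\[
g(\pi) := \pi\,\min\{\pi+k,1\} + (1-\pi)\,\max\{\pi-ck,0\},
\]
while a denied applicant stays at $\pi$. Writing $F_i$ for the CDF of $\pi^i_t$ and $f_i$ for its density, the next-step mean under threshold $\beta$ is $\mu^i_{t+1}(\beta) = E_{F_i}[h_\beta]$, where $h_\beta(\pi)=\pi$ for $\pi<\beta$ and $h_\beta(\pi)=g(\pi)$ for $\pi\geq\beta$. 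Differentiating the two integrals gives $\frac{d}{d\beta}\mu^i_{t+1}(\beta) = -\Delta(\beta)\,f_i(\beta)$, where $\Delta(\pi) := g(\pi)-\pi$ is the gain from approving an applicant at $\pi$.

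The crux of the existence claim is that $\Delta$ depends only on $\pi$, $c$, and $k$, and not at all on the group or its distribution. I would show $\Delta$ is single-crossing, negative below a cutoff and nonnegative above it, so that the mean-maximizing rule ``approve exactly those with $\Delta(\pi)\geq 0$'' is itself a threshold policy $[\hat\beta,1]$. In the regime where the clamps are inactive, $\Delta(\pi)=k\bigl[\pi(1+c)-c\bigr]$, which is increasing and vanishes at $\hat\beta=\frac{c}{1+c}$; hence $\mu^i_{t+1}(\beta)$ increases for $\beta<\hat\beta$ and decreases for $\beta>\hat\beta$, giving the \emph{same} maximizer $\hat\beta$ for both $i\in\{A,D\}$. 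This simultaneity holds precisely because $\Delta$ is group-independent. The main obstacle is the bookkeeping around the $\min/\max$ clamps: I would verify that $\hat\beta=\frac{c}{1+c}$ lies in the interior region (both the relevant inequalities reduce to $k(1+c)\leq 1$), and handle the degenerate corner cases separately, e.g. $ck\geq 1$, where $\Delta\geq 0$ throughout and $\hat\beta=0$, confirming the optimizer stays a single threshold in every regime.

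Finally, I would establish the inequality at $\beta=\hat\beta$ via stochastic dominance. Setting $\phi := h_{\hat\beta}$, I claim $\phi$ is non-decreasing on $[0,1]$: it equals $\pi$ below $\hat\beta$; $g$ is non-decreasing on $[\hat\beta,1]$ since $g'\geq 0$ in each clamp region (interior $1+k(1+c)$, lower-clamp $2\pi+k$, upper-clamp $2-2\pi+ck$, doubly-clamped $1$, all nonnegative for $\pi\in[0,1]$ and $c,k\geq 0$); and there is no downward jump at $\hat\beta$ because $\Delta(\hat\beta)=0$ forces $g(\hat\beta)=\hat\beta$, so $\phi$ is continuous there. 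Since $\pi^A_t$ dominates $\pi^D_t$ over $[0,1]$ in the sense of the paper's definition, i.e. $F_A\leq F_D$ pointwise, which is exactly first-order stochastic dominance, we have $E_{F_A}[\phi]\geq E_{F_D}[\phi]$ for every non-decreasing $\phi$. Taking $\phi=h_{\hat\beta}$ yields $\mu^A_{t+1}=E_{F_A}[h_{\hat\beta}]\geq E_{F_D}[h_{\hat\beta}]=\mu^D_{t+1}$, as desired.
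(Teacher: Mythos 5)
Your proof is correct in its essentials, and its first half mirrors the paper's: both arguments hinge on the fact that the per-applicant approval gain $\Delta(\pi)=g(\pi)-\pi$ depends only on $\pi$, $c$, $k$ and not on the group, and that its sign is single-crossing, so the mean-maximizing threshold is common to both populations. Your marginal formulation $\frac{d}{d\beta}\mu^i_{t+1}(\beta)=-\Delta(\beta)f_i(\beta)$ and the closed form $\hat\beta=c/(1+c)$ in the unclamped regime are sharper than the paper's version, which only asserts that $g$ is nondecreasing (and, somewhat inaccurately, ``linear'') and locates the threshold as the first point where $g(x)>x$. Where you genuinely diverge is the second half: the paper splits $\mu^i_{t+1}$ into conditional expectations above and below $\hat\beta$ and compares them groupwise, which as written does not account for the differing mixture weights $F_i(\hat\beta)$ and $1-F_i(\hat\beta)$; your route --- showing the full one-step map $h_{\hat\beta}$ is nondecreasing (via $g'\geq 0$ in each clamp regime and $\Delta(\hat\beta)\geq 0$ to rule out a downward jump at the threshold) and then invoking the standard fact that first-order stochastic dominance orders expectations of nondecreasing functions --- avoids that issue entirely and is the cleaner argument. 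One slip to fix: in the corner case $ck\geq 1$ the failure branch is clamped to $0$, so $\Delta(\pi)=\pi\left(\min\{\pi+k,1\}-1\right)\leq 0$ throughout, and the optimizer is $\hat\beta=1$ (or any point of $[1-k,1]$, where $\Delta=0$), not $\hat\beta=0$; the sign in your remark is reversed, though this does not affect the single-crossing structure or the conclusion.
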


To ensure that  $\mu^A_{t+1} - \mu^A_{t} = \mu^D_{t+1} - \mu^D_{t}$, we must lower the advantaged population mean without a corresponding increase in the disadvantaged population mean. It has been shown in similar situations that many group fairness criteria, such as demographic parity or equal opportunity, can be implemented through a decision threshold rule that permits different thresholds for different groups \cite{liuDelayedImpactFair2018a}. In other words, these fairness criteria are often not Pareto optimal. 

\subsection{Long Term Behavior}

We examine the steady-state behavior of this system while making one key assumption: that there exists no external influence on the dynamics of the system. 

\begin{proposition}
    Assume that $c,k \in \mathbb{Q}$, and let $\beta \in [0,1]$ be some arbitrary threshold. Then $P(\beta < \lim_{t \to \infty} \pi^i_t < 1) = 0$ for all group identities $i \in \{A,D\}$. 
\end{proposition}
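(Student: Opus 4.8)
The plan is to recognize that, for a single individual, the sequence $\pi^i_t$ is a time-homogeneous Markov chain on $[0,1]$ and to use the rationality of $c,k$ to make its reachable state space finite. A state $x<\beta$ is frozen (absorbing); the state $x=1$ is absorbing (an approved applicant with repayment parameter $1$ draws $p=1$ and returns to $1$); and from an active state $x\in[\beta,1)$ the chain moves to $\min\{x+k,1\}$ with probability $x$ and to $\max\{x-ck,0\}$ with probability $1-x$. Writing $k=a/q$ and $ck=b/q$ over a common denominator $q$, every increment is an integer multiple of $1/q$, so starting from a fixed $\pi^i_0$ all reachable values lie in the finite, dynamics-invariant set $\big((\pi^i_0+\tfrac1q\mathbb{Z})\cap[0,1]\big)\cup\{0,1\}$. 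This is exactly where the hypothesis $c,k\in\mathbb{Q}$ enters: it turns the process into a finite-state Markov chain.

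First I would invoke the standard fact that a finite Markov chain is almost surely absorbed into a recurrent communicating class in finite time, with every transient state visited only finitely often. It therefore suffices to show that no recurrent state lies in the open interval $(\beta,1)$: once the chain settles into a recurrent class contained in $[0,\beta)\cup\{1\}$ (and $\{0\}$ when $\beta=0$), the limit exists and avoids $(\beta,1)$, giving $P(\beta<\lim_t\pi^i_t<1)=0$.

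The heart of the argument is to show every active state $x\in[\beta,1)$ is transient by exhibiting a positive-probability escape to the absorbing region. Taking the ``late payment'' branch repeatedly, which has probability $1-x'>0$ at every active state $x'<1$, decreases the value by $ck>0$ at each step, clipping at $0$; after finitely many steps the value drops below $\beta$ (or reaches $0$) and is absorbed. Hence $x$ leads with positive probability to a closed set and is therefore transient. The only absorbing---hence only recurrent---states are the frozen points below $\beta$, the point $0$, and the point $1$, all lying outside $(\beta,1)$, which completes the argument. The degenerate cases $k=0$ or $c=0$ are immediate: the chain is then constant or monotone and converges to $\pi^i_0$, $0$, or $1$.

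I expect the main obstacle to be the careful bookkeeping at the two boundaries. The clipping operators $\min\{\cdot,1\}$ and $\max\{\cdot,0\}$ must be handled so that finiteness of the reachable set genuinely holds (the clips contribute only $\{0,1\}$) and so that the descending escape path stays valid even when $x-ck$ would undershoot $0$. A secondary subtlety is conceptual: ruling out an interior limit is stronger than merely noting there is no absorbing interior state, since a priori the chain could oscillate within $(\beta,1)$ forever, and the escape-to-absorption step is precisely what forbids such lingering. As a sanity check that does not even use rationality, note that on any path with $\lim_t\pi^i_t=L\in(\beta,1)$ the increments $|\pi^i_{t+1}-\pi^i_t|$ would have to vanish, yet each active step moves by $k$ up or $ck$ down, or jumps to a boundary bounded away from $L$---a contradiction showing the target event is empty; the Markov-chain route is the version that additionally yields almost-sure convergence of $\pi^i_t$.
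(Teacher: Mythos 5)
Your proof is correct and follows essentially the same route as the paper's: both use the rationality of $c$ and $k$ to reduce the dynamics to a finite-state Markov chain (the paper's preparatory lemma on the bounded random walk), and both rely on the same escape mechanism --- repeated late payments, each occurring with positive probability $1-x'$, drive any active state below $\beta$ or to $0$ in finitely many steps --- to conclude that every state in $(\beta,1)$ is transient. The only difference is presentational: where you invoke the standard almost-sure-absorption theorem for finite chains, the paper computes $\lim_{n\to\infty}S^n$ explicitly via a block-matrix induction and a spectral-radius argument, which your qualitative step replaces cleanly.
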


The long-term steady-state of the process results in a bifurcation in the population within each identity subgroup: part of the population attains repayment probability $1$, while another part of the population is relegated to being below $\beta$ for perpetuity. In reality, external interventions can often propel members of the sub-$\beta$ group to a higher status. However, we are merely examining the long-term effects of this process in isolation, showing that \textit{no matter} the choice of $\beta$, low-resource members of the population may not be able to be helped by this process. In \autoref{ap: proofs} we demonstrate how we can compute the probability of certain stationary distributions given initial parameters.

\section{Interventions}

In the previous section, we showed that under certain assumptions, no decision-making policy will eliminate initial status inequalities. Now we examine several different types of interventions. Recall in subsection~3.1 %\ref{equation: updateeq}
 that the policymaker can choose to set the threshold $\beta$. We consider situations under which the policymaker can change $c$ as well. 

Consider the following aggregate utility function for a population of size $n$ that is divided according to some identity label $i \in \{A, D\}$. Let $\alpha \in [0,1]$. First, let $\hat{\beta} = \text{argmax}_\beta \lim_{t \to \infty} E[\pi^i_t]$. Then let $\mathbb{E}[\hat{\pi^i_t}]$ denote the maximum mean repayment rate prior to any intervention, that is, the expected value at $t$ after applying the optimal decision threshold $\beta = \hat{\beta}$ after every time step. Let $\mathbb{E}[\pi^i_t]$ denote the mean repayment rate at time $t$ under some intervention policy. Then the policymaker's utility from an intervention is as follows: 

\begin{equation*}
U_t = -\alpha|\mathbb{E}[\pi_t^A] - \mathbb{E}[\pi_t^D]| + (1-\alpha) [|(\mathbb{E}[\pi_t^A] - \mathbb{E}[\hat{\pi_t^A}]| + |(\mathbb{E}[\pi_t^D] - \mathbb{E}[\hat{\pi_t^D}]|]
\end{equation*}

The first term corresponds to outcome parity: to what extent does it matter that the two groups have different outcome means? The second term corresponds to aggregate utility change across both groups, the difference between the optimal unconstrained policy and the post-intervention policy. This is a version of a utilitarian equity-efficiency tradeoff, where a higher $\alpha$ represents a preference for outcome equity over outcome efficiency and vice versa. We examine the effects of three possible interventions. In this particular formulation, we assume that the populations are of equal sizes, but it can be modified for the case where the populations are of different sizes.

\begin{itemize}
\item Policy 1: "beta only intervention" 

These policies only modify $\beta$. Depending on $\alpha$, we choose $\beta^i$ to maximize $U_t$. As demonstrated before, since $\mathbb{E}[\hat{\pi_t^A}] \geq \mathbb{E}[\hat{\pi_t^D}]$ in general, this generally means choosing $\beta$ such that it decreases the expected outcome of the advantaged group to bring it closer to parity with the expected outcome of the disadvantaged group.

\item Policy 2: "group-blind intervention"

This intervention modifies $c$. Let $\hat{c}$ be the pre-intervention level of $c$. Let $r$ correspond to the magnitude of change that we can make to $\hat{c}$. When $r$ is large, this means that we have the resources to decrease $\hat{c}$ significantly. We choose the new $c = \hat{c}-r\hat{c}/2$. When $\hat{c} < c$, then $\hat{\pi^i_t} \leq \pi^i_t$. Then, we also choose the optimal $\beta_i$ to maximize $U_t$. 

\item Policy 3: "group-conscious intervention"

We now assume that $c$ can be different for different groups: $c_i$ depends on the subgroup. This intervention exclusively modifies $c_D$. Because we are directing resources specifically to modify $c_D$, assume that $c_A = \hat{c_A}$ and $c_D = \hat{c_D} - r\hat{c_D}$. Similarly to policy 2, we permit changes in $\beta_i$ to maximize $U_t$.

\end{itemize}

Policy type 1 is a technical intervention within the existing decision structure, which holds $k$ and $c$ constant. Policy types 2 and 3 are structural interventions that depend on the amount of resources $r$ to modify the structural parameter $c$. $r$ can also be thought of, more generally, as the responsiveness of the social world to policies that attempt to alter $c$. 

In the following simulations, we start with two populations, one advantaged and one disadvantaged. We choose a sample size of $1000$ and simulate the effects of different policies and interventions on samples of these populations. First, we determine the optimal $\beta_A$ such that $\lim_{t \to \infty} \pi^A_t$ is maximized, and likewise for $D$. Then, we compare these outcomes $\lim_{t \to \infty} U_t$, with the outcomes generated when we choose certain interventions. 

\begin{figure*}
    \centering
    \includegraphics[scale=0.2]{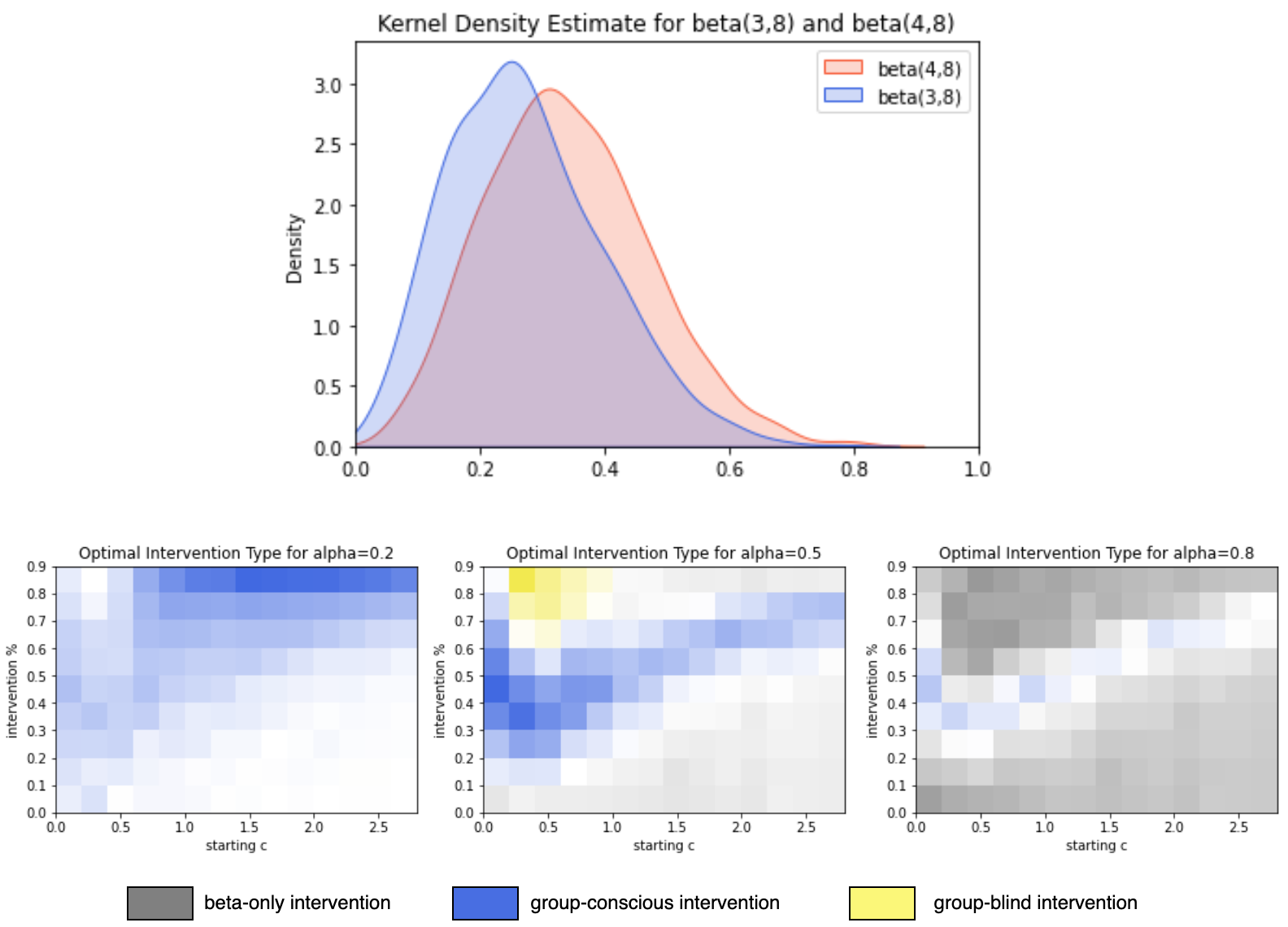}
    \caption{The top graph displays the distribution of $\pi^i_0$ for the advantaged (pink) population and the disadvantaged (blue) population. The three bottom graphs depict the recommended intervention for different levels of $\alpha$. The y-axis is $r$, as described in the earlier sections that enumerate each policy. We choose the policy that maximizes $\lim_{t \to \infty} U_t$. Higher opacity levels correspond to a stronger recommendation for a certain intervention over the other two (all effect sizes are scaled to 1). The lower left coordinate of each box is the anchor point for each box; for example, a yellow square with lower left coordinate (a,b) means that at starting $c=a$ and intervention \% $r = b$, the group-blind intervention is recommended.}
    \label{fig:single_graph}
\end{figure*}

In \autoref{fig:single_graph}, we sample $\pi^A_0$ from the $\text{beta}(4,8)$ distribution and $\pi^D_0$ from the $\text{beta}(3,8)$ distribution.\footnote{We chose the beta distribution because it is parametric with a $[0,1]$ range, making it well suited for modeling probabilities. There is precedent in the fairness literature for simulating risk with such distributions: see \cite{williamsDynamicModelingEquilibria2019, corbett-daviesMeasureMismeasureFairness2018, baumannEnforcingGroupFairness2022}.} We display a probability density estimate at time $t=0$ in the top graph. The bottom three graphs display the recommended policies after simulating policy outcomes over $t=20$ timesteps. They represent three different policymaker preferences: the leftmost graph shows the recommended interventions when the policymaker's $\alpha = 0.2$, that is, when they prefer outcome efficiency over equity, the middle graph with $\alpha=0.5$, which is a balanced preference, and the rightmost with $\alpha=0.8$ corresponds with a preference for equity over efficiency. For each of these individual graphs, the $x$-axis corresponds to the starting $c_i$ value, and the $y$-axis corresponds to the magnitude of change that is possible. In other words, a high intervention \% on the $y$-axis means that the starting $c_i$ can be significantly reduced by the policy intervention. 

\autoref{fig:single_graph} shows that group-conscious structural interventions are more strongly recommended when $\alpha=0.2$, especially when the intervention magnitude is large, while the beta-only interventions are sufficient when $\alpha=0.8$. These recommendations are highly contingent upon the starting distributions of $\pi^i_0$, however, and we see patterns emerge when comparing how recommendations differ depending on the initial distribution.

\autoref{fig:comparisons} depicts these results for 4 different distributions. Across all distributions, it appears that increasing $\alpha$ leads to a stronger recommendation for the $\beta$-only intervention. On the contrary, when $\alpha$ is low, the group-conscious interventions are more strongly recommended in general. $\alpha$ measures the degree one prefers "outcome equity" over "overall efficiency." Somewhat counterintuitively, this means that an increased preference for efficiency over outcome equity increases the strength of recommendations for group-specific interventions.

\begin{figure*}
    \centering
    \includegraphics[scale=0.3]{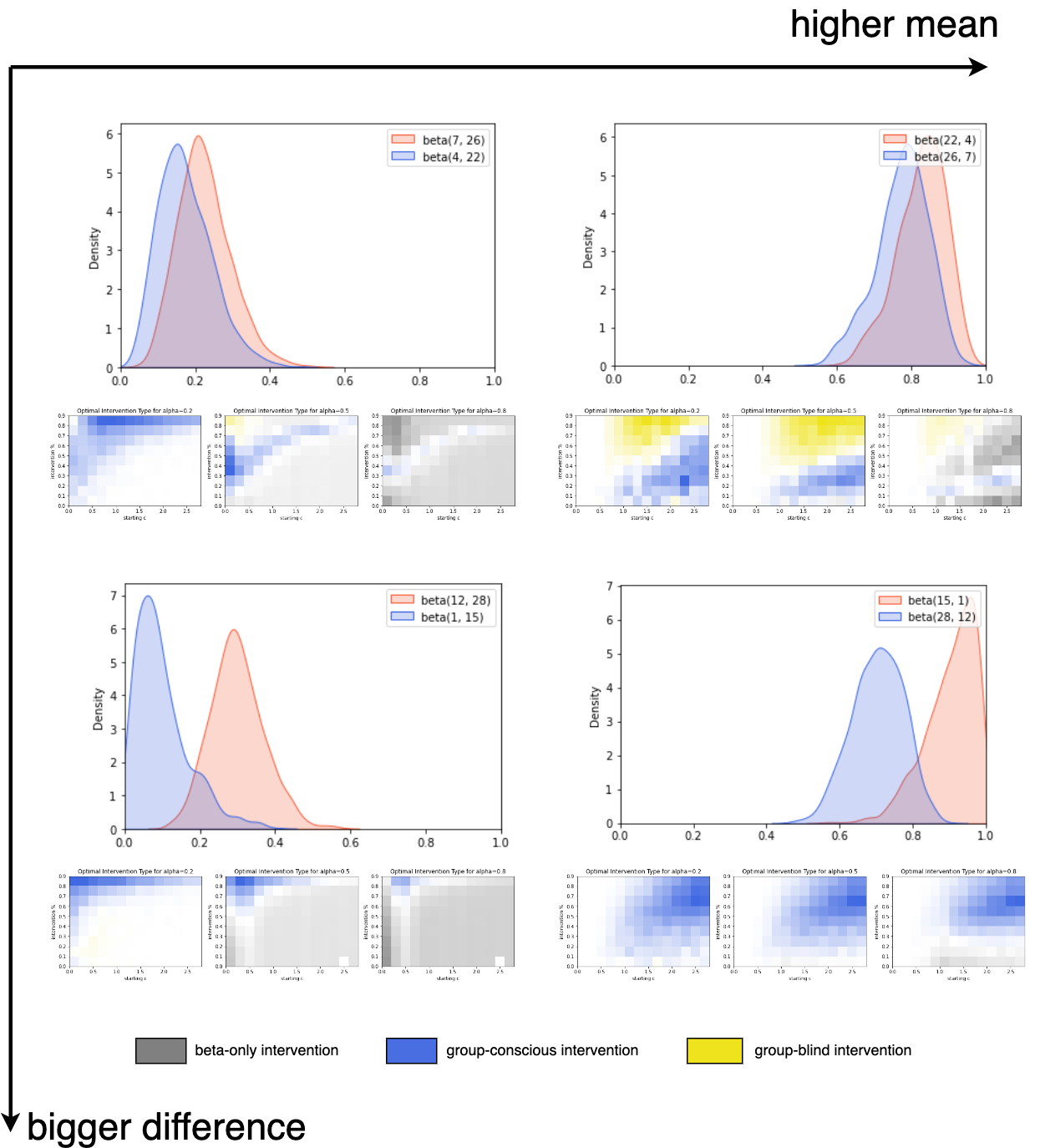}
    \caption{Each quadrant depicts starting distributions and recommended interventions at $\alpha=0.2,0.5,0.8$. Blue corresponds to the group-conscious intervention, yellow corresponds to the group-blind, and gray corresponds to beta-only. The lower left coordinate of each box is the anchor point for each box; for example, a blue square with lower left coordinate (a,b) means that at starting $c=a$ and intervention \% $r=b$, the group-conscious intervention is recommended. Opacity corresponds to the marginal utility of this intervention over the next best intervention type.}
    \label{fig:comparisons}
\end{figure*}

Other results depend more specifically on the starting distributions. First, we can look at the two graphs on the left side of \autoref{fig:comparisons} when the overall mean is low for both populations. Here there is a strong recommendation for group-conscious interventions when the starting $c$ level is low, or when there are enough resources such that the $c$ level can be reduced significantly for the disadvantaged population. This pattern shows up very clearly in the bottom left corner, when there is high intergroup difference. In these cases, only a large intervention will have any effect, because both populations are starting at a place where they are more likely to fail. The "band" pattern exhibited when the mean is low but the populations are close together can be explained by the fact that a too-large intervention would benefit the disadvantaged population to a point where they would be better off than the advantaged population. In this case, a preference for outcome parity would only lead to recommending the group-conscious intervention in the case where it leads exactly to parity and increased efficiency in outcomes. 

Next, when the overall mean is high, corresponding to the two graphs on the right side of \autoref{fig:comparisons}, a different pattern emerges. In the case when the populations have initially close starting positions, if the policymaker has fewer resources, and $c$ is high, the recommendation is to invest in group-conscious interventions. However, when $c$ is already low, or when the intervention is large, the recommended intervention is the group-blind intervention. This is likely because if the starting point for the disadvantaged population is fairly high, and it is easy for them to "catch up" to the advantaged population, a large intervention is not required for them to reach outcome parity and stability at the right-hand side of the distribution. In this case, it is most beneficial overall to improve the overall condition of both parties. 

But this pattern is different in the case where the population means are far apart, even when the overall mean is high. In this case, there is a strong preference for improving the plight of the disadvantaged group through structural means when $c_i$ can be significantly reduced. Since the advantaged population is already so well off, and the disadvantaged population can achieve much better outcomes through a c-type intervention, the recommendation is to invest all resources into the latter.

In general, the recommended intervention depends strongly on the policymaker's preference for outcome equity versus efficiency, and also depends on the starting conditions of the two populations. Note that each of these represents only the long-term effect of a one-time intervention. An expanded study may look at how dynamic interventions may respond to changing population distributions to improve overall conditions or to increase the speed of convergence of the two distributions in time.

\section{Empirical Demonstration}\label{section: empirical demo}

\subsection{Results}
We apply these methods to demonstrate how a mortgage loan approval procedure may perpetuate existing financial inequalities between Black and White applicants. We use two datasets. First is the publicly-available Home Mortgage Disclosure Act (HMDA) dataset from 2021,\footnote{https://ffiec.cfpb.gov/data-browser/} an individual loan-level dataset that records important demographic, geographic, and economic features of applicants, as well as labels of whether each application was approved or denied. It has a couple of limitations: although HMDA collects information on credit scores, this information is not released to the public. HMDA data also crucially does not contain data on default or late payment. 

The second dataset used is the publicly-available Fannie Mae Single Family Loan Performance Data\footnote{https://capitalmarkets.fanniemae.com/credit-risk-transfer/single-family-credit-risk-transfer/fannie-mae-single-family-loan-performance-data}. Fannie Mae (FNMA) and Freddie Mac are government-sponsored enterprises that buy conforming mortgages on the secondary mortgage market from lending institutions. As of 2020, Fannie Mae and Freddie Mac own approximately sixty percent of conforming mortgage loans in the U.S. \cite{WhatTypesMortgages2021} The Fannie Mae dataset contains a nationwide subset of all mortgage loans owned by Fannie Mae, originated between 2000 and 2022. FNMA contains loan-level information on the financial features of borrowers at loan origination. FNMA also contains time series, loan-level data on repayment on a monthly basis. For ease of data processing, we limited the analysis to mortgages originating in Massachusetts, and we used a subset of the FNMA data from the first two quarters of 2017. 

Our goal was to determine the distribution of risk of mortgage non-repayment across different subpopulations in the 2021 HMDA dataset. To do this, we found shared features between the HMDA and FNMA dataset: loan amount, loan-to-value ratio, debt-to-income ratio, and number of units in the property. Then, we created a logistic regression to predict risk of late payment, which we then trained on the FNMA dataset. Using this model, we predicted risk of late payment in the 2021 HMDA sample. Because there were relatively few features used, this model may not have extremely high predictive accuracy; however, it is useful for illustrative purposes to demonstrate how different distributions of risk across different populations may result in sustained inequality. For additional information on the datasets and methodology, see \autoref{ap: empirical}.

\begin{figure*}
    \centering
    \includegraphics[scale=0.4]{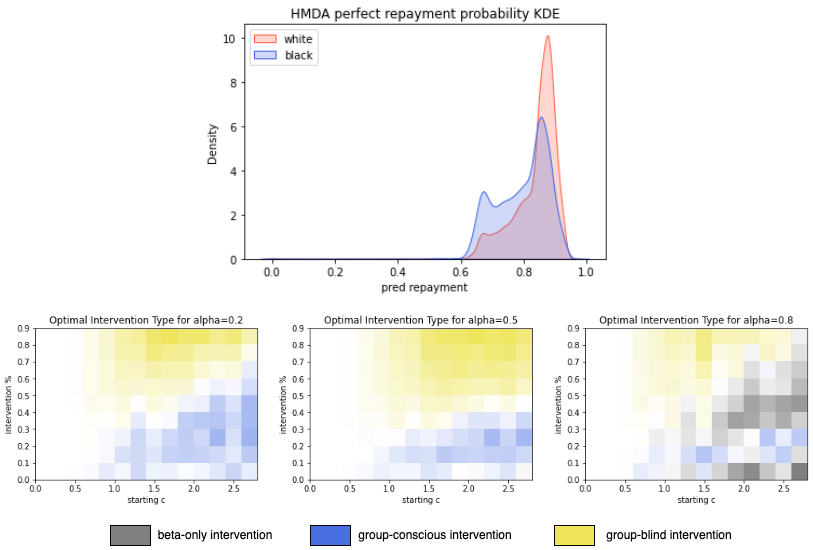}
    \caption{The top panel shows the empirical kernel density estimate of the distribution predicted risk of late payment in the HMDA Massachusetts 2021 data. The bottom three graphs show the recommended interventions for different $\alpha$, $c$, and intervention size.}
    \label{fig:hmda demo}
\end{figure*}

The predicted repayment values are displayed in the top panel of \autoref{fig:hmda demo}. This distribution fulfills the condition of stochastic dominance described above, and an empirical test across all $x \in [0,1]$ at a step size of $0.01$ verifies this assumption. See \autoref{fig:emp cdf} in \autoref{ap: empirical} for a visual demonstration. Since this is true, if we let $E[\pi^W_t]$ to be the expectation of repayment probability for the White population and $E[\pi^B_t]$ to be the expectation of repayment probability for the Black population, \autoref{structineqcoro} tells us that for all possible $\beta$, $E[\pi^W_t] \geq E[\pi^B_t]$. 

We then simulated the repayment status of each loan for $t=20$ timesteps to model the dynamics of population inequality over time. \autoref{fig:hmda demo} displays the recommended intervention for each value of $c$ and the policymaker's choice of $\alpha$. Notice that when $c<1$, the choice of intervention becomes relatively unimportant, as both distributions converge to $1$ without any external intervention. When $c>1$, we are split into two cases. When we have the resources to perform a large intervention that significantly reduces $c$, the optimal policy is a group-blind intervention that benefits both groups. However, when resources are limited and the intervention size is small, if we prefer outcome parity, the beta-only intervention is optimal. If we prefer efficiency, the group-conscious intervention that lowers $c$ for one group only is optimal. 

There are several caveats to the model: first, we calculated repayment risk based on a single late payment. While it is true that a single late payment can substantially lower credit score \cite{demyanykImpactMissedPayments2017}, penalties for late payments on mortgage loans tend to be relatively lenient, compared to auto or credit loans \cite{lanningWhatAreConsequences2020}. Mortgage default tends to result in more serious consequences, but there were not enough defaults in the dataset to construct a meaningful prediction model. 

Second, our model evaluates the entire population at every time step to determine whether they are above or below the given threshold, “accepting” or “rejecting” them each time. In practice, individuals do not have to periodically re-apply for a loan, and furthermore, rejected applicants are not likely to re-apply for a loan so soon after a rejection. Regardless, the outcomes of the model conform to the likely result, which is that approved households remain approved at every time step unless they fall below some threshold, after which they remain perpetually in the rejection state. Future work may model the effects of interventions on rejected members below the threshold.

Third, this model imperfectly captures the relationship between wealth accrual from homeownership and loan repayment risk. Households accrue wealth the longer they own their homes, but wealth may have heterogeneous or nonlinear effects on repayment risk \cite{sadhwaniDeepLearningMortgage2021}.  

Finally, notice that this sample only includes individuals who applied for a mortgage loan. If one’s concern is only with the fairness of the model, this is not a problem. However, for the purposes of understanding larger-scale patterns of housing injustice, our sample likely understates the racial gap in mortgage risk. This is because the mortgage applicant pool is subject to self-selection bias. Many individuals who want to own a home, who do not have enough financial assets to do so, will not apply.

\subsection{Discussion and Policy Recommendations}

There exists persistent racialized inequalities in housing despite the de jure elimination of segregation and discrimination in the U.S. \cite{squiresFightFairHousing2017}. Homeownership is one of the primary ways to build wealth in the U.S., and the gap between the percentage of white Americans who are homeowners and the percentage of Black Americans who are homeowners has increased since the 1960s \cite{herbertHomeownershipStillEffective2013,choiExplainingBlackWhiteHomeownership2019, kueblerClosingWealthGap2013}. This is a result of the legacy of historical redlining and segregation, as well as present-day structural barriers to homeownership for Black Americans. This can be seen in the 2008 financial crisis: well-intentioned policies that attempted to reduce barriers to homeownership have nevertheless harmed communities they intended to help by subjecting largely low-income, single-head-of-household Black women to predatory and exploitative loans \cite{taylorRaceProfitHow2019}.

We chose this application domain because it demonstrates how loan approval algorithms are embedded in an unjust social context. The statistical association between race and financial capital cannot be eliminated only by better data collection, blinding methods, or fairness constraints. Black Americans who, on average have less wealth and income, are less likely to be able to afford and maintain a mortgage \cite{deyRoleCreditAttributes2022, hilberExplainingBlackWhite2008, renFrameworkExplainingBlackWhite2020}. In line with recent work on reparative algorithms, we suggest a framework of how to evaluate policies that actively seek to remedy these injustices, rather than merely engaging in formalist procedures that gesture at fairness without any improvement of material circumstances \cite{soFairnessReparativeAlgorithms2022a, davisAlgorithmicReparation2021}.

What kind of policies will close the homeownership and wealth gap? On a pure egalitarian view, outcome parity can be achieved by harming the advantaged group enough such that their average outcome is closer to the disadvantaged average. This is the recommended policy when we place a heavy weight on bare egalitarianism; a simple alteration of the decision threshold may lead to worse outcomes for the advantaged group. However, this outcome may not be intuitively desirable. This intuition powers the well-known leveling-down objection to egalitarianism \cite{parfitEqualityPriority2002}. Prioritarianism is an attractive alternative to egalitarianism which argues that benefits to the worst-off members of society should be weighed more heavily than benefits to advantaged members \cite{temkinEgalitarianismDefended2003}. Prioritarian concerns apply most saliently when we are faced with the question of reallocating benefits from a more advantaged group to a less advantaged group. But in this case, the act of deliberately harming members of the advantaged group does not entail material benefit to the worse-off group. 

Pure egalitarianism, and its non-consequentialist counterpart, demographic parity, have drawn a plethora of criticism \cite{dworkFairnessAwareness2011, abu-elyounesContextualFairnessLegal2020}. If we wish to design algorithms that advance material justice and want to reject pure egalitarian approaches towards equality, we must alter the decision structure itself.

Our model includes one such approach. In practice, several policy mechanisms can lower $c$. For instance, late mortgage payments have a significant impact on credit score \cite{demyanykImpactMissedPayments2017}, which may impact one’s ability to access credit in other markets. Mitigating the drop of credit score upon a single late payment may lower the financial impact of late payment, as credit ratings can causally impact future creditworthiness \cite{mansoFeedbackEffectsCredit2013, purnanandamCanCreditRating2023}. Another intervention may involve making mortgage forbearance programs with favorable interest rates more widely available during times of financial crisis. Mortgage forbearance programs were used by largely minority and low-income borrowers during first couple years of the COVID-19 pandemic, and such programs reduced inequalities, helped maintain cash reserves for families affected by unemployment, and allowed borrowers to pay off other pressing debts \cite{anInequalityTimeCOVID192022, farrellDidMortgageForbearance2020, santiagoRoleForbearanceSustaining2023}.

There is evidence that $c$ for mortgage loans is already lower than it is for some other forms of debt \cite{lanningWhatAreConsequences2020}. Mortgage debts, unlike automobile debts or credit card debts, may not result in immediate foreclosure, whereas credit card loans or auto loans tend to default much more quickly. In the housing domain, our model may be more useful for modeling the housing process for renters: a single missed rent payment is much more likely to result in an eviction filing. Eviction may significantly negatively affect many aspects of a renter’s life, including health outcomes \cite{vasquez-veraThreatHomeEviction2017, benferEvictionHealthInequity2021}, employment prospects \cite{collinsonEvictionPovertyAmerican2024, desmondHousingEmploymentInsecurity2016}, and future housing outcomes \cite{tsaiLongitudinalStudyHousing2021}. A structural change in this area would decouple the effect of late payments and the cascade of harmful consequences that result from eviction.

Mortgage forbearance programs or eviction harm reduction programs are policy changs that we call “structural modifications to the decision system.” Another structural intervention may increase the affordable housing stock, an intervention which operates upstream of the decision process. Mathematically, we can model this as a shift in the distribution of risk across all populations—everybody is more likely to find affordable housing. Both types of interventions can be modeled by changing the parameters that govern the behavior of the dynamic system, and future work can evaluate the tradeoffs and the impacts of upstream and downstream interventions in the housing space.

However, such programs can be limited in their imaginative capacity. One structural intervention that is harder to model may involve the restructuring of debt markets entirely. For instance, Herzog calls into question the ethics of private debt markets, arguing that they are a mechanism for dynamically reinforcing injustice \cite{herzogEpistemicDivisionLabour2020}. Such sweeping social reconfigurations are currently difficult to evaluate, model, or implement, and more research is necessary to understand how to design and assess such alternative structures.

\section{Conclusion}

In this project, we have modeled a common decision-making mechanism in a loan approval setting and shown how certain initial inequitable conditions can be perpetuated over time. Then, inspired by accounts of structural injustice in algorithmic decision-making \cite{kasirzadehAlgorithmicFairnessStructural2022, greenEscapingImpossibilityFairness2022}, we imagine how a structural intervention might be incorporated into this formal model as alterations upon parameters that govern the relationship between an algorithmic decision and its social consequences. We then demonstrate the effects of policies that target these parameters, showing how different policies may be recommended depending on a policymaker's preference for equity or efficiency. In particular, we simulate the effects of resource limitations and constraints on the magnitude of possible social change on our recommended interventions. We find that in many cases, structural interventions are widely preferred and are often recommended over exclusively technical interventions, especially when efficiency is valued over bare outcome equality.

Our project imagines a type of structural intervention upon a penalty parameter, which represents only one way in which we can model the effects of structural change. More research is necessary to assess the types of policies which cause social change. Further research can also create more sophisticated models of larger-scale social intervention.

We wish to conclude by emphasizing that such interventions can hold the key to addressing many of the conundrums and contradictions that the fairness community has grappled with for the past few years: why is it that so many common, intuitive conceptions of fairness are incompatible with each other in practice? And why is it that adhering to these fairness metrics may not materially benefit disadvantaged populations in the long run? Industry standard fairness criteria should not be rejected entirely; they operate at a different scale, and serve a different function, from structural interventions that aim to advance justice in various domain areas of social importance. However, it is important to clarify the extent to which fairness guidelines can deliver on desirable outcomes. If policymakers are interested in more equitable, efficient social outcomes, structural interventions may be most suited towards advancing those ends.

\section{Ethics and Positionality Statement}

The publicly-available datasets from HMDA and Fannie Mae both contain fully anonymized entries. At the beginning of our study, we considered using a fuzzy matching procedure to link HMDA entries to the monthly default data available through Fannie Mae. However, because crucial features are missing from both datasets, it quickly became clear that both these datasets are designed such that matching would be impossible to do, out of privacy concerns. For instance, HMDA does not contain credit score as a feature, and Fannie Mae contains only the first three digits of a zip code and no demographic information, making it very difficult to perform any matching based on geographical characteristics, demographic characteristics, or financial characteristics. Any ethical concerns regarding the privacy of the individuals in these datasets are largely mitigated by the fact that these datasets do not contain granular enough information to perform any matching procedure. 

The authors of this paper work at a North American research university and the empirical application of this project is largely based on a U.S.-centric understanding of the racialized dynamics of the housing market. Additionally, the methods employed by the authors are largely quantitative and engineering-based, reflecting the authors’ training and background.

%%
%% The acknowledgments section is defined using the "acks" environment
%% (and NOT an unnumbered section). This ensures the proper
%% identification of the section in the article metadata, and the
%% consistent spelling of the heading.
\begin{acks}
We would like to thank the anonymous FAccT reviewers for their feedback and suggestions for this paper. We thank members of the MIT Systemic Racism \& Computation working group in housing, particularly Wonyoung So and Catherine D'Ignazio, for feedback and discussions. Additionally, we would like to thank Sally Haslanger, William Li, Mikhail Yurochkin, Kristjan Greenewald, Felice Frankel, Anthony Bau, and Shomik Jain for helpful conversations that have informed the development of this paper. This research has been generously supported by the MIT-IBM Watson AI Lab. 
\end{acks}

%%
%% The next two lines define the bibliography style to be used, and
%% the bibliography file.
\bibliographystyle{ACM-Reference-Format}
\bibliography{references}

\clearpage

\onecolumn
%%
%% If your work has an appendix, this is the place to put it.
\appendix

\section{Proofs} \label{ap: proofs}

\begin{definition}
A random variable A dominates B over the interval $[\gamma, \delta]$ if for all $x \in [\gamma, \delta]$, $F_A(x) \leq F_B(x)$, where $F$ is the cumulative distribution function.
\end{definition}

\begin{proposition}\label{structineq}
    Assume that for a given threshold $\beta \in [0,1]$ such , $\pi^A_t$ dominates $\pi^D_t$ for the interval $[\beta, 1]$, and that $\mu^A_t \geq \mu^D_t$. Then if $P(0 < \pi^A_{t+1},\pi^D_{t+1} < 1) = 1$, then $\mu^A_{t+1} - \mu^A_{t}  \geq \mu^D_{t+1} - \mu^D_{t}$.
\end{proposition}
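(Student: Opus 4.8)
The plan is to linearize the update on the approval region, reduce the one-step mean increment to a single Stieltjes integral against an affine weight, and then push through the first-order stochastic dominance on $[\beta,1]$.

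First I would use the hypothesis $P(0<\pi^A_{t+1},\pi^D_{t+1}<1)=1$ for its intended purpose: it guarantees that, almost surely, the post-update value never reaches the clipping boundaries, so the outer $\max\{\min\{\cdot,1\},0\}$ can be discarded. On $\{\pi^i_t\ge\beta\}$ the update then becomes the affine map $\pi^i_{t+1}=\pi^i_t+kp-ck(1-p)$, and on $\{\pi^i_t<\beta\}$ it is the identity. Conditioning on $\pi^i_t=x$ and averaging over $p\sim\mathrm{Ber}(x)$, an approved individual has expected one-step increment $g(x):=kx-ck(1-x)=k(1+c)x-ck$, while a denied individual contributes nothing. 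Integrating against the population law gives $\Delta_i:=\mu^i_{t+1}-\mu^i_t=\int_{[\beta,1]}g\,dF_{\pi^i_t}$, so the claim reduces to $\int_{[\beta,1]}g\,dF_{\pi^A_t}\ge\int_{[\beta,1]}g\,dF_{\pi^D_t}$.

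Because $g$ is affine, I would re-express each increment through the approval mass $q_i:=P(\pi^i_t\ge\beta)$ and the truncated mean $m_i:=\int_{[\beta,1]}x\,dF_{\pi^i_t}$, namely $\Delta_i=k(1+c)\,m_i-ck\,q_i$. Dominance on $[\beta,1]$ (i.e.\ $F_{\pi^A_t}\le F_{\pi^D_t}$ there) immediately yields $q_A\ge q_D$; integrating the nonnegative increasing function $x-\beta$ against $F_{\pi^D_t}-F_{\pi^A_t}\ge0$ gives the sharper comparison $m_A-m_D\ge\beta(q_A-q_D)$. Combining these two facts produces the key bound $\Delta_A-\Delta_D\ge(q_A-q_D)\,g(\beta)$.

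The main obstacle is the sign of the marginal weight $g(\beta)$. When $g(\beta)\ge0$, equivalently $\beta\ge\frac{c}{1+c}$ — the regime in which approval is (weakly) beneficial even at the threshold — the truncated integrand $x\mapsto g(x)\,\mathbf{1}\{x\ge\beta\}$ is nondecreasing, the displayed bound is nonnegative, and the inequality follows using only dominance on $[\beta,1]$; this is the case I would write up first. The delicate situation is $g(\beta)<0$, where the truncated integrand jumps downward at $\beta$ and is no longer monotone, so first-order dominance cannot be invoked off the shelf: the just-above-threshold mass of $A$, which is approved yet net-harmed, can in principle pull $A$'s increment below $D$'s. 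I would therefore argue that the relevant regime is $\beta\ge\frac{c}{1+c}$ — precisely the region a welfare-maximizing policymaker would choose, and where the optimal $\hat{\beta}$ of the next proposition satisfies $g(\hat{\beta})=0$ — and use the auxiliary hypothesis $\mu^A_t\ge\mu^D_t$ to fix the sign of the pre-existing gap, so that $\Delta_A\ge\Delta_D$ genuinely expresses a \emph{widening} of inequality rather than a mere ordering of increments. Pinning down this marginal term is the step I expect to demand the most care.
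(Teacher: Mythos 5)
Your reduction is the same as the paper's: use $P(0<\pi^i_{t+1}<1)=1$ to strip the clipping, write the one-step increment of group $i$ as $k$ times an integral of the affine gain $g(x)=(1+c)kx-ck$ over the approval region, and compare the two groups via dominance on $[\beta,1]$. (Your $(q_i,m_i)$ parametrization and the paper's $(1-F_{\pi^i_t}(\beta),\mu^{i+}_t(\beta))$ are the same decomposition.) Where you part ways is the final inequality, and your caution there is not excessive --- it locates a genuine gap. The paper concludes from $1-F_{\pi^A_t}(\beta)\ge 1-F_{\pi^D_t}(\beta)\ge 0$ and $\mu^{A+}\ge\mu^{D+}$ that $(1-F_{\pi^A_t}(\beta))\,h(\mu^{A+})\ge(1-F_{\pi^D_t}(\beta))\,h(\mu^{D+})$ with $h(\mu)=(1+c)\mu-c$; that implication fails whenever $h$ is negative on the relevant range, which is exactly your regime $g(\beta)<0$, i.e.\ $\beta<c/(1+c)$. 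And the statement itself fails there, not merely the proof: take $\beta=0.5$, $c=2$, $k=0.1$, $\pi^D_t$ uniform on $[0.05,0.15]$ and $\pi^A_t$ uniform on $[0.5,0.6]$. Dominance on $[\beta,1]$ holds ($F_{\pi^D_t}\equiv 1$ there), $\mu^A_t>\mu^D_t$, no clipping occurs, yet $\mu^D_{t+1}-\mu^D_t=0$ while $\mu^A_{t+1}-\mu^A_t=k\bigl[(1+c)(0.55)-c\bigr]=-0.035<0$.

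So your plan is the honest version of the result. In the regime $\beta\ge c/(1+c)$ your chain $\Delta_A-\Delta_D\ge(q_A-q_D)\,g(\beta)\ge 0$ closes the argument cleanly, and it is actually more robust than the paper's: it never invokes the intermediate claim $\mu^{A+}(\beta)\ge\mu^{D+}(\beta)$, which the paper asserts follows from dominance plus $\mu^A_t\ge\mu^D_t$ but which can also fail (e.g.\ put nearly all of $A$'s mass just above $\beta$ and a small atom at $1$, and give $D$ the same small atom at $1$ with the rest of its mass below $\beta$; dominance on $[\beta,1]$ and the mean ordering both hold, yet $\mu^{D+}\approx 1>\mu^{A+}$). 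For $\beta<c/(1+c)$ you should not expect to ``pin down the marginal term'' --- there is nothing to pin down; the proposition needs the additional hypothesis $g(\beta)\ge 0$ (or some substitute guaranteeing that approval is in expectation beneficial at the margin), and the right move is to state that condition explicitly rather than to keep searching for a proof of the unrestricted claim.
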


\begin{proof}

    For ease of notation, let $\pi^+_{t+1} = \max\{\min\{\pi^i_t + k p- ck(1-p), 1\}, 0\}$. First, we use the tower rule:
\begin{align*}
    E[\pi^+_{t+1}] &=
    E[ \pi^+_{t+1}| \pi^i_t + kp - ck(1-p)) \leq 0] p(\pi^i_t + k (\pi^i_t - c(1-\pi^i_t)) \leq 0) \\
    & + E[\pi^+_{t+1} | 0<\pi^i_t + kp - ck(1-p) < 1] p(0 < \pi^i_t - c(1-\pi^i_t)) < 1) \\
    & + E[\pi^+_{t+1} | 0<\pi^i_t + kp - ck(1-p) \geq 1]  p(\pi^i_t - c(1-\pi^i_t)) \geq 1) 
\end{align*}
which, by our assumption, simplifies to
\begin{align*}
    E[\pi^+_{t+1}]
    &= E[\pi^+_{t+1} | 0<\max\{\min\{\pi^i_t + kp - ck(1-p), 1\}, 0\} < 1] \\
    &= E[\pi^i_t + kp - ck(1-p)| 0<\pi^i_t + kp - ck(1-p) < 1]
\end{align*}

    Now we want to show that $\mu^A_{t+1} - \mu^A_{t}  - (\mu^D_{t+1} - \mu^D_{t}) \geq 0.$ Let the conditional expectation $\mu_t^+(\beta) = E[\pi^i_t | \pi^i_t \geq \beta]$ and $\mu_t^-(\beta) = E[\pi^i_t | \pi^i_t < \beta]$.
    \begin{align*}
        \mu_{t+1} &= (1-F_{\pi_t}(\beta)) E[\pi^i_t + kp - ck(1-p)| 0<\pi^i_t + kp - ck(1-p) < 1, \pi^i_t \geq \beta] + F_{\pi_t}(\beta)E[\pi_t | \pi_t < \beta] \\
        &= (1-F_{\pi_t}(\beta)) [\mu_t^+(\beta) + k (\mu_t^+(\beta)) - ck(1-\mu_t^+(\beta))] + F_{\pi_t}(\beta) \mu_t^-(\beta) \\
        &= [(1-F_{\pi_t}(\beta))(\mu_t^+(\beta)) + F_{\pi_t}(\beta) \mu_t^-(\beta)] + (1-F_{\pi_t}(\beta))k (\mu_t^+(\beta) - c(1-\mu_t^+(\beta))) \\
        &= \mu_t + (1-F_{\pi_t}(\beta))k (\mu_t^+(\beta) - c(1-\mu_t^+(\beta)))
    \end{align*}
    The second equality follows from the first, again by the tower rule. Substituting this in, we have
    \begin{equation*}
        \mu^A_{t+1} - \mu^A_{t}  - (\mu^D_{t+1} - \mu^D_{t}) = \\
        (1-F_{\pi^A_t}(\beta))k (\mu_t^{A+}(\beta) - c(1-\mu^{A+}_t(\beta))) + (1-F_{\pi^D_t}(\beta))k (\mu_t^{D+}(\beta) - c(1-\mu^{D+}(\beta))
    \end{equation*}
    Now based on our assumption, we know that $F_{\pi^A_t}(\beta) \leq F_{\pi^D_t}(\beta)$, so $(1-F_{\pi^A_t}(\beta)) \geq (1- F_{\pi^D_t}(\beta))$. The assumptions $\mu^A_t \geq \mu_t^D$ and the dominance assumption guarantee that the conditional mean $\mu^{A+}(\beta) \geq \mu^{D+}(\beta)$. This ensures that $ \mu^A_{t+1} - \mu^A_{t}  - (\mu^D_{t+1} - \mu^D_{t}) \geq 0$, as desired.

    \end{proof}

\begin{proposition}
    Assume that $\pi_t^A$ dominates $\pi_t^D$ over the interval $[0, 1]$, and $c, k \geq 0$. There exists a optimal decision threshold $\hat{\beta}$ that simultaneously maximizes $E[\pi^A_{t+1}]$ and $E[\pi^D_{t+1}]$. Then $\mu^A_{t+1} \geq \mu^D_{t+1}$ if we set the threshold to be $\hat{\beta}$.
\end{proposition}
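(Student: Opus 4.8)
The plan is to pass to the expected one-step update map and exploit first-order stochastic dominance. For a fixed threshold $\beta$, define $\phi(\pi) = E[\pi^i_{t+1} \mid \pi^i_t = \pi]$, the expected next-period repayment probability of an applicant currently at $\pi$. Since $p \sim \mathrm{Ber}(\pi)$, this map is the \emph{same} function for both groups, and it splits as $\phi(\pi) = \pi$ for $\pi < \beta$ (rejected applicants are frozen) and $\phi(\pi) = g(\pi)$ for $\pi \geq \beta$, where
\[
g(\pi) = \pi\,\min\{\pi+k,1\} + (1-\pi)\,\max\{\pi-ck,0\}.
\]
By the tower rule, $\mu^i_{t+1}(\beta) = E[\phi(\pi^i_t)] = \mu^i_t + \int_\beta^1 h(\pi)\,dF_{\pi^i_t}(\pi)$, where $h(\pi) = g(\pi) - \pi$ is the expected one-step gain from approving an applicant at $\pi$. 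The crucial observation is that $h$ depends only on $\pi$ and the fixed parameters $c,k$, and never on the group or its distribution.

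First I would establish existence of the simultaneously optimal threshold. Maximizing $\mu^i_{t+1}(\beta)$ over $\beta$ amounts to choosing $\beta$ so that $\int_\beta^1 h\,dF$ collects exactly the region where $h \geq 0$. Ignoring the clips, $h(\pi) = k\big((1+c)\pi - c\big)$, which is increasing and changes sign at $\hat\beta = c/(1+c)$; I would then confirm, via the piecewise analysis below, that in general $h \leq 0$ on $[0,\hat\beta)$ and $h \geq 0$ on $[\hat\beta, 1]$. Consequently $\beta = \hat\beta$ maximizes $\int_\beta^1 h\,dF$ for \emph{every} distribution $F$, so the single threshold $\hat\beta$ simultaneously maximizes both $\mu^A_{t+1}$ and $\mu^D_{t+1}$, giving the existence claim.

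For the inequality, I would show $\phi$ is non-decreasing on $[0,1]$ when $\beta = \hat\beta$. On $[0,\hat\beta)$ it is the identity; on $[\hat\beta,1]$ it equals $g$, which is increasing in each clip regime (the respective derivatives $1+k(1+c)$, $2\pi+k$, and $2-2\pi+ck$ are all nonnegative for $c,k\geq 0$ and $\pi\in[0,1]$). The two pieces match continuously at $\hat\beta$ precisely because $h(\hat\beta)=0$, i.e. $g(\hat\beta)=\hat\beta$, so there is no downward jump and $\phi$ is monotone. Since $\pi^A_t$ dominates $\pi^D_t$ on $[0,1]$ and $\phi$ is non-decreasing, first-order stochastic dominance yields $\mu^A_{t+1} = E[\phi(\pi^A_t)] \geq E[\phi(\pi^D_t)] = \mu^D_{t+1}$, as claimed.

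The main obstacle is the clipping at $0$ and $1$: the clean linear form of $h$ and the monotonicity of $g$ hold cleanly only in the interior regime $ck < \pi < 1-k$, so I would need a careful case analysis confirming that (i) $g$ remains continuous and increasing across the down-clip region $[0,ck]$ and the up-clip region $[1-k,1]$, and (ii) the sign change of $h$ still occurs at $\hat\beta = c/(1+c)$, with the trivial isolated point $\pi=0$ (where approval is irrelevant) not disturbing the threshold structure. Once this bookkeeping is handled, both the distribution-free optimality of $\hat\beta$ and the monotonicity of $\phi$ follow, and the dominance step closes the argument.
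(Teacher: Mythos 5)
Your proposal is correct, and the first half coincides with the paper's argument: both identify the pointwise approved-update map $g(\pi)=\pi\max\{\min\{\pi+k,1\},0\}+(1-\pi)\max\{\min\{\pi-ck,1\},0\}$, observe that the sign of the gain $g(\pi)-\pi$ is determined by $\pi$, $c$, $k$ alone, and conclude that the optimal threshold is distribution-free (you go further by exhibiting it explicitly as $\hat\beta=c/(1+c)$ in the unclipped regime, which the paper leaves implicit as the crossing point $x_0$). Where you genuinely diverge is the final inequality $\mu^A_{t+1}\geq\mu^D_{t+1}$: the paper splits into the conditional expectations above and below $\hat\beta$, argues each difference is nonnegative, and then recombines --- a step that, as written, glosses over the fact that the two groups weight these conditional pieces with different probabilities $F(\hat\beta)$, so the unconditional inequality does not follow immediately from the two conditional ones. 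Your route avoids this entirely: you show the full one-step map $\phi$ (identity below $\hat\beta$, $g$ above, with an upward or zero jump at $\hat\beta$ since $g(\hat\beta)\geq\hat\beta$) is nondecreasing on all of $[0,1]$, and then invoke the standard fact that first-order stochastic dominance is preserved under nondecreasing transformations, giving $E[\phi(\pi^A_t)]\geq E[\phi(\pi^D_t)]$ in one clean step. This is both tighter and arguably repairs the weakest link in the paper's version; the clipping case analysis you flag does go through (the derivatives $1+k(1+c)$, $2\pi+k$, and $2-2\pi+ck$ are all nonnegative, and in the doubly-clipped region $g$ reduces to the identity, so monotonicity and the sign pattern of $h$ hold on every piece).
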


\begin{proof}

First, we show that there exists a $\hat{\beta}$ such that it maximizes the value of
$$
\mu_{t+1} = (1-F_{\pi_t}(\beta)) E[\max\{\min\{\pi^i_t + kp - ck(1-p), 1\}, 0\} | \pi^i_t > \beta] + F_{\pi^i_t}(\beta)E[\pi^i_t | \pi^i_t < \beta]
$$

We know that $\pi^i_t$ is a continuous random variable, and the cdf $F_{\pi^i_t}$ is continuous. Furthermore, $E[\pi^i_t | \pi^i_t < \beta]$ is a conditional expectation of a continuous random variable, and thus is also continuous. The only term remaining that we have to assess is $E[\max\{\min\{\pi^i_t + kp - ck(1-p), 1\}, 0\}]$. We can rewrite this, knowing that $p \sim \text{Ber}(\pi^i_t)$: 
\begin{align*}
E[\max\{\min\{\pi^i_t + kp - &ck(1-p), 1\} | \pi^i_t > \beta] = \int_{\beta}^1 \max\{\min\{x + kp - ck(1-p), 1\}, 0\} f_{\pi_t | \pi_t > \beta}(x) dx \\
&= \int_{\beta}^1  [x\max\{\min\{x + k, 1\}, 0\} +
    (1-x)\max\{\min\{x - ck, 1\}, 0\}]f_{\pi_t | \pi_t > \beta}(x) dx \\
&= E[\pi^i_t\max\{\min\{\pi^i_t + k, 1\}, 0\}+(1-\pi^i_t)\max\{\min\{\pi^i_t - ck, 1\}, 0\} | \pi^i_t > \beta]
\end{align*}

This is the conditional expectation of a function of a continuous variable $\pi^i_t$, where the function consists of products, maximums, and minimums that are all continuous. Therefore we know $\mu_{t+1}$ is a continuous function of $\beta$. By the extreme value theorem, $\mu_{t+1}$ attains a maximum value. That is, there exists $\beta$ such that $\mu_{t+1}(\beta)$ attains a maximum.

    Now we show that the maximum does not depend on the distribution of $\pi_t$. We can write $\mu_{t+1}$ as a conditional sum:
    \begin{align*}
        \mu_{t+1}(\beta) &= \int_{\beta}^1 \max\{\min\{x + kp - ck(1-p), 1\}, 0\} f_{\pi_t}(x) dx + \int_{0}^\beta x f_{\pi_t }(x) dx \\
        &= \int_{\beta}^1 [x\max\{\min\{x + k, 1\}, 0\} +
    (1-x)\max\{\min\{x - ck, 1\}, 0\}]f_{\pi_t }(x) dx + \int_{0}^\beta xf_{\pi_t }(x) dx
    \end{align*}
    The function $g(x) = x\max\{\min\{x + k, 1\}, 0\} +
    (1-x)\max\{\min\{x - ck, 1\}, 0\}$ is nondecreasing in $x$ if $k,c \geq 0$ and linear between $0$ and $1$. So either $g(x) \leq x$ for all $x \in [0,1]$ or there exists some minimal $x_0 \in [0,1]$ such that for all $x > x_0$, $g(x) > x$.

    In the first case, set $\beta = 1$, and in the second, set $\beta = \min\{\max\{x_0,0\},1\}$. We can see that this would maximize the sum above. Finding $\beta$ here also does not depend on the distribution of $\pi_t$, so the optimal $\hat{\beta}$ is the same for both the advantaged and disadvantaged populations. 

    Second, we prove that $\mu^A_{t+1} \geq \mu^D_{t+1}$. Because of the dominance assumption, we know that 
    $$
    E[\pi^A_{t+1} | \pi^A_{t+1} < \hat{\beta}] -E[\pi^D_{t+1} | \pi^D_{t+1} < \hat{\beta}] = \int_{0}^{\hat{\beta}} x f_{\pi^A_t | \pi^A_t < \hat{\beta} }(x) dx - \int_{0}^{\hat{\beta}} x f_{\pi^D_t |\pi^D_t < \hat{\beta}}(x) dx \geq 0 
    $$
    Recall that $g(x)$ is nondecreasing in $x$. So
    \begin{align*}
        &E[\pi^A_{t+1} | \pi^A_{t+1} \geq \hat{\beta}] -E[\pi^D_{t+1} | \pi^D_{t+1} \geq \hat{\beta}] \\
        &=  \int_{x_1}^{x_2} [x\max\{\min\{x + k, 1\}, 0\}+(1-x)
    \max\{\min\{x - ck, 1\}, 0\}][f_{\pi^A_t}(x)-f_{\pi^D_t}(x)] dx  \geq 0
    \end{align*}

    Because of our dominance assumption, and because $[x\max\{\min\{x + k, 1\}, 0\}+(1-x)
    \max\{\min\{x - ck, 1\}, 0\}]$ is linear and nondecreasing in $x$,  this expression $\geq 0$.
    Therefore $E[\pi^A_{t+1} | \pi^A_{t+1} \geq \hat{\beta}] -E[\pi^D_{t+1} | \pi^D_{t+1} \geq \hat{\beta}]$. Since, as shown above, it is also true that $ E[\pi^A_{t+1} | \pi^A_{t+1} < \hat{\beta}] -E[\pi^D_{t+1} | \pi^D_{t+1} < \hat{\beta}]$, we can conclude over all that $E[\pi^A_{t+1}] \geq E[\pi^D_{t+1}]$.
     \end{proof}

\begin{lemma}
    Consider a random walk in one dimension bounded between $[x_0,x_1]$, starting at some point between $x_0$ and $x_1$. One moves some distance to the right, $y\geq 0$ with some probability $\pi$, and some distance to the left, $z\leq 0$ with probability $1-\pi$. Suppose that any point $p > x_1$ is all one absorbing state, and any point $p < x_0$ is another absorbing state, and suppose $y, z \in \mathbb{Q}$. Then there exists a finite amount of positions between $x_0$ and $x_1$, plus the two absorbing states, that the random walk could attain with probability $>0$.
\end{lemma}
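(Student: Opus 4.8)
The plan is to show that the set of positions the walk can occupy with positive probability is contained in a single arithmetic progression determined by the starting point together with the common denominator of $y$ and $z$, and then to observe that such a progression meets the bounded interval $[x_0,x_1]$ in only finitely many points.

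First I would characterize the reachable interior positions. Write $s$ for the starting point. A position $p$ strictly between $x_0$ and $x_1$ is attainable with positive probability precisely when there is a finite sequence of admissible steps carrying $s$ to $p$ without previously entering an absorbing region; since each rightward step contributes $y$ and each leftward step contributes $z$, every such position has the form $p = s + my + nz$ for some nonnegative integers $m,n$. Here "reachable" and "reachable with positive probability" coincide, because any admissible path of finite length has strictly positive probability, namely a product of factors $\pi$ and $1-\pi$.

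Next I would exploit the rationality hypothesis, which is the substantive step. Let $q$ be a common denominator of $y$ and $z$, so that $y = a/q$ and $z = b/q$ with $a,b \in \mathbb{Z}$. Then $my + nz = (ma + nb)/q$, and since $ma + nb$ is an integer, every attainable interior position lies in the set $s + \tfrac{1}{q}\mathbb{Z}$. This is the crux: even though $s$ itself may be irrational, all attainable interior positions sit on a single lattice of spacing $1/q$ anchored at $s$, so rationality of the step sizes — not of the start — is exactly what is doing the work.

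Finally I would invoke boundedness. The points of $s + \tfrac{1}{q}\mathbb{Z}$ are spaced exactly $1/q$ apart, so at most $\lfloor q(x_1 - x_0)\rfloor + 1$ of them can lie in the bounded interval $[x_0,x_1]$. Hence there are finitely many attainable interior positions, and adjoining the two absorbing states yields a finite total, as claimed. The only point requiring care is the bookkeeping of degenerate cases — $y=0$, $z=0$, or $\pi \in \{0,1\}$ — but each of these only shrinks the attainable set and so cannot threaten finiteness; I expect the argument to be routine once the reachable set has been pinned down to the lattice, with the lattice-plus-boundedness observation being the entire content.
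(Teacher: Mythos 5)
Your proof is correct and takes essentially the same route as the paper: express each step as a fraction over a common denominator $q$, observe that all reachable positions lie on a lattice of spacing $1/q$, and conclude that a bounded interval contains only finitely many such points. Your version is in fact slightly more careful than the paper's, since you explicitly anchor the lattice at the starting point $s$ (which the paper's displayed formula omits) and note that only the rationality of the step sizes, not of the start, is needed.
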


\begin{proof}
    Since $y,z \in \mathbb{Q}$, we can represent them as $y=\frac{a}{b}$ and $z=\frac{c}{d}$, where $a,b,c,d \in \mathbb{Z}$. Then at any arbitrary time $t \geq 0$, we have taken $n$ steps to the right and $m$ steps to the right, where $m,n \in \mathbb{N}$. So the position we are at at time $t$ is either $x_0, x_1$, or can be represented as 
    $\frac{ma}{b} + \frac{nc}{d} = \frac{mad + ncb}{bd}$. There are a finite number of multiples of $\frac{1}{bd}$ between $x_0$ and $x_1$, so there are a finite number of positions that the random walk can attain.
\end{proof}

\begin{proposition}
    Assume that $c,k \in \mathbb{Q}$, and let $\beta \in [0,1]$ be some arbitrary threshold. Then $P(\beta < \lim_{t \to \infty} \pi_t < 1) = 0$. 
\end{proposition}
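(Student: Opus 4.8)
The plan is to fix an individual (equivalently, to condition on the starting value $\pi_0$) and recognize the update rule as a one-dimensional random walk whose reachable states form a \emph{finite} set, so that it becomes a finite time-homogeneous Markov chain. For $\pi_t \geq \beta$ the process moves to $\min\{\pi_t + k,1\}$ with probability $\pi_t$ (an on-time payment, $p=1$) and to $\max\{\pi_t - ck,0\}$ with probability $1-\pi_t$ (a late payment, $p=0$), while every value below $\beta$ is frozen by the lower branch of the update. Two features give this an absorbing structure: the point $1$ is absorbing, since at $\pi_t=1$ we have $p\sim\mathrm{Ber}(1)$ and the next state is $\min\{1+k,1\}=1$; and the whole region $[0,\beta)$ is absorbing, since anything there stays put. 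Because $c,k\in\mathbb{Q}$, the step sizes $k$ and $-ck$ are rational, so the preceding Lemma applies with $x_0=\beta$ and $x_1=1$: conditional on $\pi_0$, only finitely many positions in $(\beta,1)$ are reachable with positive probability. The reachable state space is therefore finite, consisting of these interior (transient) positions together with the two absorbing classes, and the transition law from a state $v$ (up with probability $v$, down with probability $1-v$) does not depend on $t$, so the chain is time-homogeneous.

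Next I would show that absorption is certain. Take $k,c>0$ (the degenerate case $k=0$ gives a constant process and is excluded by the model's use of $k$ as a positive scaling parameter; the case $c=0$ yields a non-decreasing walk that climbs to the absorbing state $1$ almost surely and is handled separately). From any transient state $v\in[\beta,1)$ the late-payment move occurs with probability $1-v>0$, and finitely many consecutive down-moves of size $ck$ drive the position below $\beta$ into the absorbing region; hence from every transient state there is a positive-probability path to an absorbing state. For a finite chain this forces almost-sure absorption: there exist a uniform $N$ and $\epsilon>0$ such that from any transient state the probability of absorption within $N$ steps is at least $\epsilon$, so the probability of surviving in the transient set past $mN$ steps is at most $(1-\epsilon)^m\to 0$. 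Thus $P(\text{the walk remains in }(\beta,1)\text{ forever}\mid \pi_0)=0$.

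Finally I would convert absorption into the statement about the limit. On the absorption event (probability $1$ given $\pi_0$), $\pi_t$ is eventually constant and equals either a value strictly below $\beta$ (a down-move crossing $\beta$ lands at $\max\{\cdot-ck,0\}<\beta$) or the value $1$; in either case $\lim_{t\to\infty}\pi_t\notin(\beta,1)$. Hence $P(\beta<\lim_{t\to\infty}\pi_t<1\mid\pi_0)=0$ for every $\pi_0$, and taking expectation over the distribution of $\pi_0$ gives $P(\beta<\lim_{t\to\infty}\pi_t<1)=0$; the argument is identical for each group $i\in\{A,D\}$. I expect the main obstacle to be the reduction in the first paragraph: without the rationality of $c,k$ the reachable positions could be infinite, or even dense, leaving room for $\pi_t$ to accumulate at an interior point, and it is precisely the Lemma that collapses the dynamics to a finite chain in which convergence can only occur through absorption. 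A secondary subtlety worth flagging is the careful treatment of the boundary and degenerate parameter values ($\beta=0$, $c=0$, $k=0$), which must be checked so that the eventually-constant value always lies outside the open interval $(\beta,1)$.
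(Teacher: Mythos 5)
Your proposal is correct, and it follows the same overall skeleton as the paper's proof: both invoke the rationality of $c$ and $k$ through the preceding Lemma to collapse the dynamics to a finite time-homogeneous Markov chain, and both identify the absorbing structure as the point $1$ together with the frozen region below $\beta$. Where you diverge is in how you establish that absorption is certain. The paper writes the transition matrix in block form $\bigl[\begin{smallmatrix} I & 0 \\ A & B\end{smallmatrix}\bigr]$, computes $S^n$ by induction, and argues $\lim_{n\to\infty}B^n=0$ via a bound on the row sums of $B^k$ together with Gershgorin's disc theorem and the Jordan normal form. You instead use the elementary probabilistic argument: a uniform $N$ and $\epsilon>0$ such that every transient state is absorbed within $N$ steps with probability at least $\epsilon$, giving survival probability at most $(1-\epsilon)^m$ after $mN$ steps. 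The two finishes rest on the identical key fact --- from every state in $[\beta,1)$ a finite run of down-moves, each with probability at least $1-v>0$, reaches the absorbing region --- but yours avoids the spectral machinery entirely (which is also the shakier part of the paper's write-up, e.g.\ its appeal to vanishing diagonal entries of $S^k$). You also do something the paper does not: you explicitly flag the degenerate parameter values, and in particular you correctly observe that for $k=0$ the process is constant and the proposition as literally stated fails, so a positivity assumption on $k$ (and a separate treatment of $c=0$) is genuinely needed. That is a worthwhile refinement rather than a defect of your argument.
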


\begin{proof}
Consider a Markov chain, and let $\pi_0$ be the starting state. Since $c, k \in \mathbb{Q}$, this means there are finitely many states that $\pi_t$ can attain between $[0, 1]$, as per the above lemma. Suppose there are $n$ states, and $m$ of them are absorbing states that correspond to $0, 1$, and any point $< \beta$. Then we can describe this with a $n \times n$ matrix $S$, where each entry $s_{ij}$ corresponds to the transition probability between state $x_i$ and state $x_j$. This transition matrix can also be written as the block matrix
$$
S = 
\begin{bmatrix}
    I & 0 \\
    A & B
\end{bmatrix}
$$
where the identity block in the top left corresponds to all the absorbing states. Then it can be shown that 
$$
\lim_{n \to \infty} S^n = 
\begin{bmatrix}
    I & 0 \\
    \sum_{k=0}^\infty B^k A & \lim_{n \to \infty} B^n
\end{bmatrix}
$$
is the limiting distribution.

We show using induction that 
$$
S^n = \begin{bmatrix}
    I & 0 \\
    \sum_{k=0}^{n-1} B^kA & B^n
\end{bmatrix} 
$$
The $n=1$ base case is $S$. For the inductive step, assume that this holds for $i\geq 2$. We show that this is true for $i+1$.

$$
S^{k+1} = S^kS =
\begin{bmatrix}
    I & 0 \\
    \sum_{k=0}^{i-1} B^kA & B^i
\end{bmatrix}
\begin{bmatrix}
    I & 0 \\
    A & B
\end{bmatrix} = 
\begin{bmatrix}
    I & 0 \\
    B^iA + \sum_{k=0}^{i-1} B^kA & B^{i+1}
\end{bmatrix}
$$
as desired. Since the process is time-homogeneous, it is also the stationary distribution. Now we show $\lim_{n\to \infty} B^n = 0$. Consider all the non-absorbing states in the Markov chain: we know that it is possible for them to reach $0$ in a finite number of steps $-ck$, and also that it is possible for them to reach $\beta$ in a finite number of steps. For each state $s$, let $n_s$ be the minimum number of required steps for them to reach an absorbing state. Then let $k = \max_s n_s$. Then in $k$ steps, the probability of reaching an absorbing state is $>0$ for every starting state. Therefore there exists this $k$ such that the maximum row sum of $B^k < 1$, since $S^k$ is row stochastic. 

By Gersgorin's disc theorem that the spectral radius $\rho(S^k) < 1$, since the diagonal entries of $S^k = 0$ (when in a transitory state at time $t$, it is impossible to stay at that state at $t+1$ by design.) We then can write $S^k = P^{-1}JP$, where $J$ is the Jordan normal form. Then 
\begin{equation}\label{limitingmatrix}
\lim_{n \to \infty} S^n = \lim_{n \to \infty} (S^k)^n = \lim_{n \to \infty} P^{-1} J^n P = 0
\end{equation}
because all eigenvalues $\lambda < 1$ in J. Therefore the probability of reaching any absorbing state given that we start at $\pi \in [\beta,1] = 1$.
\end{proof}

\section{Empirical}\label{ap: empirical}

In \autoref{section: empirical demo}, we created a prediction of risk for a sample of applicants drawn from the HMDA 2021 applicant dataset, subsetted to include applicants who were applying for a home loan in Massachusetts. We use the 2021 data, because HMDA datasets before 2020 did not publicly release crucial features, such as the debt-to-income ratio and loan-to-value ratio of applicants. As of 2022, HMDA datasets still do not include another key feature: credit score. A subset of the available features from HMDA can be found in \autoref{tab:feature description}. As described above, HMDA does not contain data about default or late payment. To supplement this dataset, we make use of the FNMA dataset, which does contain information on default and late payment, but does not contain information on crucial demographic characteristics of applicants. Relevant FNMA variables are also listed in \autoref{tab:feature description}.

\begin{table}[]
    \centering
    \begin{tabular}{c|c}
       HMDA 2021 (Massachusetts)  &  FNMA 2017 (Massachusetts) \\
       \hline
        debt-to-income ratio & debt-to-income ratio \\
        loan-to-value ratio & loan-to-value ratio \\
        loan amount & original unpaid balance \\
        & borrower credit score \\
        & loan term \\
        loan purpose & loan purpose \\
        & property type \\
        number of units & number of units \\
        & zip code (first 3 digits) \\
        metropolitan statistical area & metropolitan statistical area \\
        & loan delinquency status \\
        & original listed price \\
        income & \\ 
        applicant gender & \\
        applicant race & \\
        applicant ethnicity & \\
        applicant age & \\
        county code & \\
        action taken (approval or denial) & \\
        interest rate & \\

    \end{tabular}
    \caption{Partial list of data fields in the FNMA and HMDA datasets}
    \label{tab:feature description}
\end{table}

Our methodology involves first constructing a prediction model for risk of late payment, trained on FNMA data, and then using this to predict risk for the HMDA applicants. Because of this, the prediction model can only include features that are shared between the two datasets. \autoref{tab:feature description} shows that six key variables are shared: debt-to-income ratio, loan-to-value ratio, loan amount, number of units, metropolitan statistical area (MSA), and loan purpose. In our methodology, we filtered both datasets to include entries where the loan purpose was "purchase" rather than "refinance." We also did not use MSA in prediction. The remaining four variables were used to construct the predictive model. 

First, using the FNMA dataset, I ran both an OLS and a logistic regression with probability of late payment as the response variable. The results can be seen in \autoref{table:FNMA model}. Comparing regressions (1) and (2) shows that there was a substantial increase in Adjusted $R^2$ when including credit score, which could potentially indicate that credit score contains important explanatory information about financial risk that is not captured by other variables, such as loan-to-value ratio and debt-to-income ratio. This demonstrates a limitation in our study. The ultimate goal of this exercise was to construct an individual loan-level dataset that contained both risk of late payment and demographic information about an applicant's membership in historically marginalized identity categories. Because the former was not available in HMDA, and the latter was not available in FNMA, our prediction for risk of late payment is an imperfect proxy for risk level. For the purposes of this study, however, this proxy is still informative for demonstrating how differences in initial qualification status can be exacerbated in the long term. 

In regression (3), I used the logistic regression (3) from \autoref{table:FNMA model} to predict late payment risk using variables shared between the HMDA and FNMA dataset: loan amount, loan-to-value ratio, debt-to-income ratio, and number of units. This was the final model that was used to generate the predictions for the HMDA dataset, and the results can be seen in \autoref{fig:hmda demo}.

\begin{table}[!htbp] \centering 
  \caption{} 
  \label{table:FNMA model} 
\begin{tabular}{@{\extracolsep{5pt}}lccc} 
\\[-1.8ex]\hline 
\hline \\[-1.8ex] 
 & \multicolumn{3}{c}{\textit{Dependent variable:}} \\ 
\cline{2-4} 
\\[-1.8ex] & \multicolumn{3}{c}{late} \\ 
\\[-1.8ex] & \multicolumn{2}{c}{\textit{OLS}} & \textit{logistic} \\ 
\\[-1.8ex] & (1) & (2) & (3)\\ 
\hline \\[-1.8ex] 
 original\_unpaid\_balance & $-$0.00000$^{***}$ & $-$0.00000$^{***}$ & $-$0.00000$^{***}$ \\ 
  & (0.00000) & (0.00000) & (0.00000) \\ 
  & & & \\ 
 original\_loan\_to\_value\_ratio\_ltv & 0.00001 & 0.001$^{***}$ & 0.010$^{***}$ \\ 
  & (0.0002) & (0.0002) & (0.002) \\ 
  & & & \\ 
 debt\_to\_income\_code & 0.006$^{***}$ & 0.009$^{***}$ & 0.074$^{***}$ \\ 
  & (0.001) & (0.001) & (0.008) \\ 
  & & & \\ 
 borrower\_credit\_score\_at\_origination & $-$0.002$^{***}$ &  &  \\ 
  & (0.0001) &  &  \\ 
  & & & \\ 
 number\_of\_units & 0.025$^{**}$ & 0.031$^{***}$ & 0.244$^{***}$ \\ 
  & (0.010) & (0.010) & (0.079) \\ 
  & & & \\ 
 Constant & 1.427$^{***}$ & 0.029 & $-$2.876$^{***}$ \\ 
  & (0.068) & (0.023) & (0.226) \\ 
  & & & \\ 
\hline \\[-1.8ex] 
Observations & 8,867 & 8,867 & 8,867 \\ 
R$^{2}$ & 0.069 & 0.019 &  \\ 
Adjusted R$^{2}$ & 0.069 & 0.019 &  \\ 
Log Likelihood &  &  & $-$3,327.246 \\ 
Akaike Inf. Crit. &  &  & 6,664.491 \\ 
Residual Std. Error & 0.324 (df = 8861) & 0.332 (df = 8862) &  \\ 
F Statistic & 131.738$^{***}$ (df = 5; 8861) & 43.289$^{***}$ (df = 4; 8862) &  \\ 
\hline 
\hline \\[-1.8ex] 
\textit{Note:}  & \multicolumn{3}{r}{$^{*}$p$<$0.1; $^{**}$p$<$0.05; $^{***}$p$<$0.01} \\ 
\end{tabular} 
\end{table} 

The results are reproduced here in \autoref{fig:hmdakde}, alongside the empirical cdf and maximum mean at different $c$ values in \autoref{fig:emp cdf}. The empirical estimated cdf shows that our assumption of stochastic dominance is satisfied since there is no overlap in the cdf. The maximum means plot shows the maximum attainable average result for each subgroup. For instance, if $c=2$, this simulation shows roughly that the optimal policy will generate $E[\pi^W_t] = 0.94$, and $E[\pi^B_t] = 0.90$. This simulation was based on 25 runs, so $t=25$ here. We can see from this that since our applicant pool has a fairly high initial probability of timely payments, a small $c$ will likely lead to convergence of the two populations to $1$.

\begin{figure}
    \centering
    \includegraphics[scale=0.45]{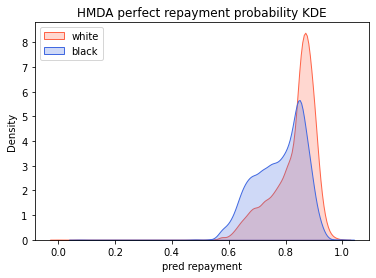}
    \caption{Empirical kernel density estimate of the distribution predicted risk of late payment in the HMDA Massachusetts 2021 data.}
    \label{fig:hmdakde}
\end{figure}

\begin{figure}
    \centering
    \includegraphics[scale=0.45]{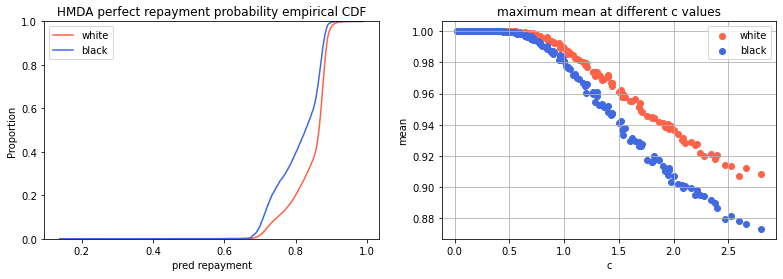}
    \caption{The left panel shows the estimated empirical cdf of the two distributions in \autoref{fig:hmdakde}. The right panel shows the maximum attainable mean for the two populations, given the starting level of $c$.}
    \label{fig:emp cdf}
\end{figure}

\end{document}